\pdfoutput=1
\RequirePackage{ifpdf}
\ifpdf 
\documentclass[pdftex]{sigma}
\else
\documentclass{sigma}
\fi

\usepackage{amscd}
\usepackage[all]{xy}
\usepackage{cancel}

\numberwithin{equation}{section}

\newtheorem{Theorem}{Theorem}[section]
\newtheorem{Corollary}[Theorem]{Corollary}
\newtheorem{Proposition}[Theorem]{Proposition}
 { \theoremstyle{definition}
\newtheorem{Definition}[Theorem]{Definition}
\newtheorem{Remark}[Theorem]{Remark} }

\begin{document}

\allowdisplaybreaks

\renewcommand{\thefootnote}{$\star$}

\newcommand{\arXivNumber}{1511.00234}

\renewcommand{\PaperNumber}{023}

\FirstPageHeading

\ShortArticleName{Haantjes Structures for the Jacobi--Calogero Model and the Benenti Systems}

\ArticleName{Haantjes Structures for the Jacobi--Calogero Model\\ and the Benenti Systems\footnote{This paper is a~contribution to the Special Issue
on Analytical Mechanics and Dif\/ferential Geometry in honour of Sergio Benenti.
The full collection is available at \href{http://www.emis.de/journals/SIGMA/Benenti.html}{http://www.emis.de/journals/SIGMA/Benenti.html}}}

\Author{Giorgio TONDO~$^\dag$ and Piergiulio TEMPESTA~$^{\ddag\S}$}

\AuthorNameForHeading{G.~Tondo and P.~Tempesta}

\Address{$^\dag$~Dipartimento di Matematica e Geoscienze, Universit\`a degli Studi di Trieste,\\
\hphantom{$^\dag$}~piaz.le Europa 1, I--34127 Trieste, Italy}
\EmailD{\href{mailto:tondo@units.it}{tondo@units.it}}

\Address{$^\ddag$~Departamento de F\'{\i}sica Te\'{o}rica II, Facultad de F\'{\i}sicas, Universidad Complutense,\\
\hphantom{$^\ddag$}~28040 -- Madrid, Spain}

\Address{$^\S$~Instituto de Ciencias Matem\'aticas (CSIC-UAM-UC3M-UCM), C/ Nicol\'as Cabrera,\\
\hphantom{$^S$}~No 13--15, 28049 Madrid, Spain}
\EmailD{\href{mailto:ptempest@ucm.es}{ptempest@ucm.es}}
\URLaddressD{\url{http://www.icmat.es/p.tempesta}}

\ArticleDates{Received November 03, 2015, in f\/inal form February 22, 2016; Published online March 03, 2016}

\Abstract{In the context of the theory of symplectic-Haantjes manifolds, we construct the Haantjes structures of generalized St\"{a}ckel systems and, as a particular case, of the quasi-bi-Hamiltonian systems. As an application, we recover the Haantjes manifolds for the rational Calogero model with three particles and for the Benenti systems.}

\Keywords{Haantjes tensor; symplectic-Haantjes manifolds; St\"ackel systems; quasi-bi-Ha\-mil\-tonian systems; Benenti systems}

\Classification{37J35; 70H06; 70H20; 53D05}

\rightline{\textit{To Sergio Benenti, on the occasion of his 70th birthday.}}

\renewcommand{\thefootnote}{\arabic{footnote}}
\setcounter{footnote}{0}

\section{Introduction}

The purpose of this paper is to present an application of the theory recently developed in \cite{TT2014}, aiming at a characterization of integrability and separability for classical Hamiltonian systems by means of the geometry of Haantjes operators.

In \cite{TT2014} the notion of Haantjes manifolds has been introduced in the realm of integrability for f\/inite-dimensional systems (see also \cite{FeMa,MFrob} and \cite{MGall} for a treatment of integrable hierarchies of PDEs from a dif\/ferent perspective). Haantjes tensors \cite{Haa} represent a natural generalization of the well known Nijenhuis tensors \cite{Nij,Nij2,Nij2+}; the $(1,1)$ tensor f\/ields with vanishing Haantjes tensor encode many crucial features of an integrable system, especially in relation with the property of separability of the associated Hamilton--Jacobi (H-J) equation. Due to the fact that these new geometrical structures are also endowed with a standard symplectic structure, we call them symplectic-Haantjes or $\omega \mathcal{H}$ \textit{manifolds}.

The theory of Haantjes manifolds is very general: it encompasses essentially all known results concerning integrability of f\/inite classical systems.

A particularly relevant class of integrable models are the \textit{separable} ones: one can f\/ind a coordinate system in which the H-J equation takes a~separated form. In this f\/ield, the contribution of Benenti has been crucial. One of his theorems~\cite{Ben80}, particularly useful for us, states that a~family of Hamiltonian functions $\{H_i\}_{1\le i\le n}$ is separable in a set of
canonical coordinates $(\boldsymbol{q};\boldsymbol{p})$ if and only if they
are in {\rm separable involution}, that is, they satisfy the conditions
\begin{gather*} \label{eq:SI}
\{H_i,H_j\}_{\vert k}=\frac{\partial H_i}{\partial q_k}
\frac{\partial H_j}{\partial p_k}-\frac{\partial H_i}{\partial p_k}
\frac{\partial H_j}{\partial q_k}=0 , \qquad 1\le k \le n,
\end{gather*}
where no summation over $k$ is understood.

The problem of separation of variables (SoV) can be recast and treated in our approach. Indeed, under mild hypotheses, from the Haantjes structure associated with an integrable system one can derive a set of coordinates, that we shall call the \textit{Darboux--Haantjes coordinates}, representing separation coordinates for the system.

The Haantjes structure associated with an integrable system allows a tensorial description (i.e., intrinsic) of its main properties.
		
In this paper, we shall prove that some of the most relevant separable systems, namely the generalized St\"ackel systems, are described in terms of suitable Haantjes structures, which are responsible of their separation properties. On one hand, by identifying the St\"ackel matrix in their def\/inition with a Vandermonde type matrix, one can obtain the class of quasi-bi-Hamiltonian systems. On the other hand, the standard St\"ackel systems are re-obtained by specializing properly the St\"ackel functions.

Another important aspect of our analysis is that two particularly relevant integrable systems, namely the Jacobi--Calogero model and the family of Benenti systems, can be studied in the framework of Haantjes geometry.

The paper is organized as follows. In Section~\ref{section2}, we review the main properties of Haantjes operators and Lenard--Haantjes chains, together with the problem of separation of variables in the context of Haantjes geometry. In Section~\ref{section3}, we construct the Haantjes structure of the gene\-ra\-lized St\"ackel systems. As a particular case, the Haantjes structure for the quasi-bi-Hamiltonian systems and for a Goldf\/ish model~\cite{CalIso} is obtained in Section~\ref{section4}. In Section~\ref{section5}, the classical St\"ackel systems and their Killing tensors are discussed. In Section~\ref{section6}, three Haantjes structures for the Jacobi--Calogero model are presented. In the f\/inal Section~\ref{section7}, Haantjes manifolds for the family of Benenti systems are obtained.

\section{The Haantjes geometry: a brief review}\label{section2}

We shall f\/irst summarize some basic results of the theory of Nijenhuis and Haantjes tensors, following \cite{Haa,Nij} (see also \cite{FN,Nij2,Nij2+}).

\subsection{Nijenhuis and Haantjes operators}
\label{sec:1}

Let $M$ be a dif\/ferentiable manifold and $\boldsymbol{L}\colon TM\rightarrow TM$ be a~(1,1) tensor f\/ield (that is, a f\/ield of linear operators on the tangent space at each point of~$M$).

\begin{Definition}\looseness=-1
The
 \textit{Nijenhuis torsion} of $\boldsymbol{L}$ is the skew-symmetric $(1,2)$ tensor f\/ield def\/ined by
\begin{gather*} 
\mathcal{T}_ {\boldsymbol{L}} (X,Y):=\boldsymbol{L}^2[X,Y] +[\boldsymbol{L}X,\boldsymbol{L}Y]-\boldsymbol{L}\big([X,\boldsymbol{L}Y]+[\boldsymbol{L}X,Y]\big),
\end{gather*}
where $X,Y \in TM$ and $[ \,,\, ]$ denotes the commutator of two vector f\/ields.
\end{Definition}

Given a set of local coordinates $\boldsymbol{x}=(x_1,\ldots, x_n)$ in $M$, the Nijenhuis torsion takes the form
\begin{gather*}
(\mathcal{T}_{\boldsymbol{L}})^i_{jk}=\sum_{\alpha =1}^n\left( \frac{\partial {\boldsymbol{L}}^i_k} {\partial x_\alpha} {\boldsymbol{L}}^\alpha_j -\frac{\partial {\boldsymbol{L}}^i_j} {\partial x_\alpha} {\boldsymbol{L}}^\alpha_k+\left(\frac{\partial {\boldsymbol{L}}^\alpha_j} {\partial x_k} -\frac{\partial {\boldsymbol{L}}^\alpha_k} {\partial x_j}\right) {\boldsymbol{L}}^i_\alpha \right),
\end{gather*}
with $n^2(n-1)/2$ independent components.

\begin{Definition}
The \textit{Haantjes tensor} associated with $\boldsymbol{L}$ is the (1,2) tensor f\/ield def\/ined by
\begin{gather*} 
\mathcal{H}_{\boldsymbol{L}}(X,Y):=\boldsymbol{L}^2\mathcal{T}_{\boldsymbol{L}}(X,Y)
+\mathcal{T}_{\boldsymbol{L}}(\boldsymbol{L}X,\boldsymbol{L}Y)
-\boldsymbol{L}\big(\mathcal{T}_{\boldsymbol{L}}(X,\boldsymbol{L}Y)+\mathcal{T}_{\boldsymbol{L}}(\boldsymbol{L}X,Y)\big).
\end{gather*}
\end{Definition}

The Haantjes tensor is skew-symmetric due to the skew-symmetry of the Nijenhuis torsion. Locally, we have
\begin{gather*}
(\mathcal{H}_{\boldsymbol{L}})^i_{jk}= \sum_{\alpha,\beta =1}^n\big(\boldsymbol{L}^i_\alpha \boldsymbol{L}^\alpha_\beta(\mathcal{T}_{\boldsymbol{L}})^\beta_{jk} +
(\mathcal{T}_{\boldsymbol{L}})^i_{\alpha \beta}\boldsymbol{L}^\alpha_j \boldsymbol{L}^\beta_k-
\boldsymbol{L}^i_\alpha\big( (\mathcal{T}_{\boldsymbol{L}})^\alpha_{\beta k} \boldsymbol{L}^\beta_j+
 (\mathcal{T}_{\boldsymbol{L}})^\alpha_{j \beta } \boldsymbol{L}^\beta_k \big)\big).
\end{gather*}

A discussion of some basic examples can be found in~\cite{TT2014}.

The powers of a single Haantjes operator generate a module over the ring of smooth functions on $M$, as can be inferred from the results in~\cite{BogCMP,BogI}.
Our main def\/initions are the following.
\begin{Definition}
A Haantjes (Nijenhuis) f\/ield of operators is a f\/ield of operators whose associated Haantjes (Nijenhuis) tensor vanishes identically.
\end{Definition}

\begin{Definition}
 A f\/ield of operators $\boldsymbol{L}$ is said to be \textit{semisimple} if is diagonalizable at each point~$\boldsymbol{x}$ of~$M$.
\end{Definition}

The following important result, proved by Haantjes, establishes the conditions assuring that the generalized eigen-distribution $\mathcal{D}_i :=\operatorname{Ker}(\boldsymbol{L}-l_{i}\boldsymbol{I})^{\rho_i}$ of $\boldsymbol{L}$ ($\rho_i$ denotes the Riesz index of the eigenvalue $l_i$) is integrable.

\begin{Theorem}[\cite{Haa}] \label{th:Haan}
Let $\boldsymbol{L}$ be a generic field of operators, and assume that the rank of each generalized eigen-distribution $\mathcal{D}_i$ is independent of $\boldsymbol{x}\in M$. The vanishing of the Haantjes tensor
\begin{gather} \label{eq:HaaNullTM}
\mathcal{H}_{\boldsymbol{L}}(X, X')= 0, \qquad \forall\, X, X' \in TM
\end{gather}
is a sufficient condition to ensure the
integrability of each distribution $\mathcal{D}_i$
and of any direct sum $\mathcal{D}_i \oplus \mathcal{D}_j \oplus \cdots \oplus \mathcal{D}_k$ $($where all indices $i,j,\ldots, k$ are different$)$. In addition, if~$\boldsymbol{L}$ is semisimple, the converse is also true.
\end{Theorem}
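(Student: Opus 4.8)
The plan is to establish \emph{integrability in the Frobenius sense}, that is, to show that each $\mathcal{D}_i$ and each admissible direct sum is involutive. The constant-rank hypothesis guarantees that the generalized eigen-distributions are smooth subbundles and that, after complexifying $TM$ if some $l_i$ are complex, we obtain a pointwise decomposition $TM=\bigoplus_i \mathcal{D}_i$. By the primary decomposition theorem the projectors $\boldsymbol{P}_i$ onto $\mathcal{D}_i=\operatorname{Ker}(\boldsymbol{L}-l_i\boldsymbol{I})^{\rho_i}$ are polynomials in $\boldsymbol{L}$, hence genuine $(1,1)$ tensor fields, and locally one may choose a frame adapted to this splitting. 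Since $\mathcal{T}_{\boldsymbol{L}}$ and $\mathcal{H}_{\boldsymbol{L}}$ are $C^\infty(M)$-bilinear and skew-symmetric, it suffices to test the involutivity condition on pairs of vector fields lying in the various $\mathcal{D}_i$.

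The computational heart is the semisimple model case $\rho_i=1$. First I would take smooth sections $X,Y$ of $\mathcal{D}_i$, so that $\boldsymbol{L}X=l_iX$ and $\boldsymbol{L}Y=l_iY$ with $l_i$ a smooth function, and expand $\mathcal{T}_{\boldsymbol{L}}(X,Y)$ from its definition using the Leibniz rule $[fX,gY]=fg[X,Y]+fX(g)Y-gY(f)X$. The terms carrying derivatives of $l_i$ cancel in pairs, leaving the compact identity $\mathcal{T}_{\boldsymbol{L}}(X,Y)=(\boldsymbol{L}-l_i\boldsymbol{I})^2[X,Y]$. Feeding this into the definition of $\mathcal{H}_{\boldsymbol{L}}$ and using the tensoriality of $\mathcal{T}_{\boldsymbol{L}}$ (so that $\mathcal{T}_{\boldsymbol{L}}(\boldsymbol{L}X,\boldsymbol{L}Y)=l_i^2\,\mathcal{T}_{\boldsymbol{L}}(X,Y)$, and similarly for the mixed arguments) collapses the Haantjes tensor to $\mathcal{H}_{\boldsymbol{L}}(X,Y)=(\boldsymbol{L}-l_i\boldsymbol{I})^4[X,Y]$. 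Thus $\mathcal{H}_{\boldsymbol{L}}=0$ forces $[X,Y]\in\operatorname{Ker}(\boldsymbol{L}-l_i\boldsymbol{I})^4=\operatorname{Ker}(\boldsymbol{L}-l_i\boldsymbol{I})=\mathcal{D}_i$, which is precisely involutivity of $\mathcal{D}_i$.

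For the admissible direct sums I would repeat the computation on mixed pairs $X\in\mathcal{D}_i$, $Y\in\mathcal{D}_j$ with $l_i\neq l_j$: the same expansion gives $\mathcal{T}_{\boldsymbol{L}}(X,Y)=(\boldsymbol{L}-l_i\boldsymbol{I})(\boldsymbol{L}-l_j\boldsymbol{I})[X,Y]$ plus terms proportional to $X(l_j)Y$ and $Y(l_i)X$, and tensoriality again yields $\mathcal{H}_{\boldsymbol{L}}(X,Y)=(\boldsymbol{L}-l_i\boldsymbol{I})(\boldsymbol{L}-l_j\boldsymbol{I})\,\mathcal{T}_{\boldsymbol{L}}(X,Y)$. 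Since $(\boldsymbol{L}-l_i\boldsymbol{I})X=0$ and $(\boldsymbol{L}-l_j\boldsymbol{I})Y=0$, the derivative terms are annihilated and one is left with $\mathcal{H}_{\boldsymbol{L}}(X,Y)=(\boldsymbol{L}-l_i\boldsymbol{I})^2(\boldsymbol{L}-l_j\boldsymbol{I})^2[X,Y]$, whose vanishing forces $[X,Y]\in\mathcal{D}_i\oplus\mathcal{D}_j$; sums of more blocks are handled identically by the corresponding product of factors. The genuinely non-semisimple case $\rho_i>1$ is where I expect the \emph{main obstacle}: now $\boldsymbol{L}X\neq l_iX$, and one must work along the filtration $\{0\}\subset\operatorname{Ker}(\boldsymbol{L}-l_i\boldsymbol{I})\subset\cdots\subset\operatorname{Ker}(\boldsymbol{L}-l_i\boldsymbol{I})^{\rho_i}=\mathcal{D}_i$. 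I would write $\boldsymbol{L}X=l_iX+\boldsymbol{N}X$ with $\boldsymbol{N}=\boldsymbol{L}-l_i\boldsymbol{I}$ nilpotent on $\mathcal{D}_i$ and argue by induction on the filtration degree, tracking the nilpotent contributions and the $l_i$-derivatives through the bilinear expansion of $\mathcal{H}_{\boldsymbol{L}}$; the delicate part is this Jordan-block bookkeeping, but the hypotheses are arranged so that every term obstructing $[X,Y]\in\mathcal{D}_i$ is proportional to a component of $\mathcal{H}_{\boldsymbol{L}}$.

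Finally, for the converse in the semisimple case I would assume that every $\mathcal{D}_i$ and every admissible direct sum is integrable and evaluate $\mathcal{H}_{\boldsymbol{L}}$ on the adapted frame. On a same-block pair, integrability gives $[X,Y]\in\mathcal{D}_i$, so $(\boldsymbol{L}-l_i\boldsymbol{I})[X,Y]=0$ and the identity $\mathcal{H}_{\boldsymbol{L}}(X,Y)=(\boldsymbol{L}-l_i\boldsymbol{I})^4[X,Y]$ vanishes at once; on a mixed pair, integrability of $\mathcal{D}_i\oplus\mathcal{D}_j$ places $[X,Y]$ in $\operatorname{Ker}(\boldsymbol{L}-l_i\boldsymbol{I})^2(\boldsymbol{L}-l_j\boldsymbol{I})^2$, which kills the expression from the previous paragraph. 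By bilinearity and skew-symmetry this exhausts all frame pairs, so $\mathcal{H}_{\boldsymbol{L}}\equiv 0$.
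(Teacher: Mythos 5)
The paper does not actually prove this theorem: it is imported verbatim from Haantjes' 1955 paper \cite{Haa} and used as a black box, so there is no internal proof to compare against. Judged on its own terms, your argument is correct and essentially complete in the \emph{semisimple} case: the identities $\mathcal{T}_{\boldsymbol{L}}(X,Y)=(\boldsymbol{L}-l_i\boldsymbol{I})^2[X,Y]$ for $X,Y\in\mathcal{D}_i$ and $\mathcal{H}_{\boldsymbol{L}}(X,Y)=(\boldsymbol{L}-l_i\boldsymbol{I})^2(\boldsymbol{L}-l_j\boldsymbol{I})^2[X,Y]$ for mixed pairs are exactly right (I checked the cancellation of the $X(l_j)$, $Y(l_i)$ terms), and since $\operatorname{Ker}(\boldsymbol{L}-l_i\boldsymbol{I})^4=\operatorname{Ker}(\boldsymbol{L}-l_i\boldsymbol{I})$ when $\rho_i=1$, both the forward implication and the converse follow as you say. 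Two small points worth making explicit: smoothness of the eigenvalues $l_i$ and of the projectors (needed to apply the Leibniz expansion and to speak of an adapted frame) has to be extracted from the constant-rank hypothesis, and the final passage from involutivity to integrability invokes the Frobenius theorem, which again uses constant rank.

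The genuine gap is the non-semisimple case, which you flag as ``the main obstacle'' and then dispose of with the assertion that ``the hypotheses are arranged so that every term obstructing $[X,Y]\in\mathcal{D}_i$ is proportional to a component of $\mathcal{H}_{\boldsymbol{L}}$.'' That sentence is not a proof, and it is precisely where the content of Haantjes' theorem lies: once $\rho_i>1$ you lose $\boldsymbol{L}X=l_iX$, so $\mathcal{T}_{\boldsymbol{L}}(\boldsymbol{L}X,\boldsymbol{L}Y)$ no longer reduces to $l_il_j\,\mathcal{T}_{\boldsymbol{L}}(X,Y)$ by tensoriality, the clean factorization of $\mathcal{H}_{\boldsymbol{L}}(X,Y)$ as a polynomial in $\boldsymbol{L}$ applied to $[X,Y]$ breaks down, and the nilpotent parts of $\boldsymbol{L}$ contribute cross terms that must be controlled by a genuine induction along the filtration $\operatorname{Ker}(\boldsymbol{L}-l_i\boldsymbol{I})\subset\cdots\subset\operatorname{Ker}(\boldsymbol{L}-l_i\boldsymbol{I})^{\rho_i}$. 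Since the theorem is stated for a \emph{generic} field of operators and only the converse is restricted to the semisimple case, the sufficiency claim you are asked to prove includes exactly the case you have left open. To close it you would either have to carry out the Jordan-block induction in detail or follow Haantjes' original route (or Bogoyavlenskij's algebraic identities in \cite{BogI}), which organize these cross terms systematically.
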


We remind that a reference frame is a set of~$n$ vector f\/ields $\{Y_1,\ldots,Y_n\}$
 which form a~basis of the tangent space $T_{\boldsymbol{x}}U$ at each point~$\boldsymbol{x}$ belonging to an open set $U\subseteq M$.
 Two frames $\{X_1,\ldots,X_n\}$ and $\{Y_1,\ldots,Y_n\}$ are said to be equivalent if $n$ nowhere vanishing smooth functions $f_i$ do exist such that
 \begin{gather*}
 X_i= f_i(\boldsymbol{x}) Y_i , \qquad i=1,\ldots,n .
 \end{gather*}
 A natural or coordinate frame $\big\{\frac{\partial}{\partial x_1}, \ldots, \frac{\partial}{\partial x_n}\big\}$ is the frame associated to a local chart $\{(x_1,\ldots$, $x_n)\}$.
 \begin{Definition}
 An \emph{integrable} frame is a reference frame equivalent to a natural frame.
 \end{Definition}

In other words, to say that a frame $\{Y_1,\ldots,Y_n\}$ is integrable there must exist a local chart $(x_1, \ldots, x_n)$ and $n$ nowhere vanishing functions $f_i$ such that
\begin{gather*}
 Y_i =f_i(\boldsymbol{x}) \frac{\partial}{\partial x_i} ,\qquad i=1,\ldots , n .
\end{gather*}

\begin{Proposition}[\cite{BCRframe}] \label{pr:BCRframe}
A reference frame in a differentiable manifold~$M$ is an integrable frame if and only if it satisfies any of the two equivalents conditions:
\begin{itemize}\itemsep=0pt
\item
each two-dimensional distribution generated by any two vector fields $Y_i$, $Y_j$ is Frobenius integrable;
\item
each $(n-1)$-dimensional distribution $E_i$ generated by all the vector fields except~$Y_i$ is Frobenius integrable.
\end{itemize}
\end{Proposition}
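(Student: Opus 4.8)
The plan is to dualize the problem. Let $\{\theta^1,\dots,\theta^n\}$ be the coframe dual to $\{Y_1,\dots,Y_n\}$, characterized by $\theta^i(Y_j)=\delta^i_j$, and let me record the structure coefficients $a^k_{ij}:=d\theta^k(Y_i,Y_j)$, so that $d\theta^k=\sum_{i<j}a^k_{ij}\,\theta^i\wedge\theta^j$. By Cartan's formula, $d\theta^k(Y_i,Y_j)=Y_i\big(\theta^k(Y_j)\big)-Y_j\big(\theta^k(Y_i)\big)-\theta^k\big([Y_i,Y_j]\big)$; since each $\theta^k(Y_j)=\delta^k_j$ is constant, this reduces to $a^k_{ij}=-\theta^k\big([Y_i,Y_j]\big)$. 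Thus $a^k_{ij}$ is precisely (minus) the $Y_k$-component of the bracket $[Y_i,Y_j]$ expressed in the frame, and the whole argument rests on controlling which of these components are forced to vanish.

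Next I would translate each bullet into vanishing conditions on the $a^k_{ij}$. The two-dimensional distribution $\langle Y_i,Y_j\rangle$ is involutive exactly when $[Y_i,Y_j]\in\langle Y_i,Y_j\rangle$, i.e.\ when $\theta^k([Y_i,Y_j])=a^k_{ij}=0$ for every $k\notin\{i,j\}$; hence the first bullet is the system $\{a^k_{ij}=0:\ k\notin\{i,j\}\}$. On the other side, $E_i$ is nothing but $\ker\theta^i$, and by the Frobenius theorem for a single nowhere-vanishing $1$-form its integrability is equivalent to $d\theta^i\wedge\theta^i=0$. Expanding $\theta^i\wedge d\theta^i$ in the coframe, the terms containing a repeated $\theta^i$ die, leaving $\theta^i\wedge d\theta^i=\sum_{p<q,\ p,q\neq i}a^i_{pq}\,\theta^i\wedge\theta^p\wedge\theta^q$, a combination of linearly independent $3$-forms; so $E_i$ is integrable iff $a^i_{pq}=0$ for all $p,q\neq i$. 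Running over all $i$, the second bullet is again the system $\{a^k_{pq}=0:\ k\notin\{p,q\}\}$. Since the two bullets produce the very same set of equations, they are equivalent, and both amount to $d\theta^k\wedge\theta^k=0$ for $k=1,\dots,n$.

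It remains to connect these identities with integrability of the frame. Assuming $d\theta^k\wedge\theta^k=0$ for all $k$, the integrating-factor form of the Frobenius theorem gives, on a common neighbourhood of any point, nowhere-vanishing functions $g_k$ and functions $x_k$ with $\theta^k=g_k\,dx_k$. Because $dx_k=g_k^{-1}\theta^k$ and the $\theta^k$ form a coframe with $g_k\neq0$, the differentials $dx_k$ are pointwise independent, so $(x_1,\dots,x_n)$ is a local chart. Writing $Y_j=\sum_m Y_j^m\,\partial/\partial x_m$ and using $\delta^i_j=\theta^i(Y_j)=g_i\,Y_j^i$ yields $Y_j=g_j^{-1}\,\partial/\partial x_j$, that is $Y_j=f_j\,\partial/\partial x_j$ with $f_j:=g_j^{-1}$ nowhere vanishing: the frame is integrable. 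The converse is immediate: if $Y_i=f_i\,\partial/\partial x_i$ in some chart, then each of the distributions $\langle Y_i,Y_j\rangle$ and $E_i$ coincides with a coordinate sub-distribution, and the vanishing brackets $[\partial/\partial x_i,\partial/\partial x_j]=0$ make these involutive; involutivity is unaffected by the nonvanishing rescalings $f_i$.

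I expect the only genuinely delicate point to be the passage from the $n$ separately integrable codimension-one foliations $\ker\theta^k$ to a single simultaneous coordinate system, since the integrating-factor theorem is applied to each $\theta^k$ in isolation and produces an a priori uncoordinated function $x_k$. What resolves this is precisely the observation that the $dx_k$, being nonzero multiples of the dual coframe $\theta^k$, are automatically linearly independent, so they assemble into a chart with no further compatibility condition to verify.
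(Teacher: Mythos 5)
Your proof is correct and complete. Note that the paper itself offers no proof of this statement: it is quoted verbatim from Benenti--Chanu--Rastelli \cite{BCRframe}, so there is nothing internal to compare against. Your dualization is exactly the right mechanism: passing to the coframe $\theta^i(Y_j)=\delta^i_j$, identifying the structure functions $a^k_{ij}=-\theta^k([Y_i,Y_j])$ via Cartan's formula, and observing that both bullet conditions reduce to the single system $a^k_{ij}=0$ for $k\notin\{i,j\}$ (equivalently $\theta^k\wedge d\theta^k=0$ for every $k$) simultaneously establishes the equivalence of the two conditions and sets up the integrating-factor form of Frobenius, $\theta^k=g_k\,dx_k$. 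The point you flag as delicate --- assembling the $n$ separately obtained functions $x_k$ into one chart --- is indeed the only place where something could go wrong, and your resolution is the right one: the $dx_k=g_k^{-1}\theta^k$ are nonzero multiples of a coframe, hence automatically pointwise independent, so no additional compatibility between the $n$ codimension-one foliations needs to be checked. The converse direction (rescaling by nowhere-vanishing $f_i$ does not affect involutivity of the coordinate sub-distributions) is also handled correctly. One purely cosmetic remark: you reuse $i,j$ both as the fixed indices of the bullet statements and as summation indices in $d\theta^k=\sum_{i<j}a^k_{ij}\,\theta^i\wedge\theta^j$; renaming the dummy indices would avoid any ambiguity.
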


Then, under the hypotheses of Theorem~\ref{th:Haan}, we can interpret equation~\eqref{eq:HaaNullTM} as the suf\/f\/icient condition that ensures the existence of a suitable integrable generalized eigen-frame of~$\boldsymbol{L}$. Furthermore, if $\boldsymbol{L}$ is semisimple, condition~\eqref{eq:HaaNullTM} is also necessary.

\subsection{Symplectic-Haantjes manifolds: an outline}
The symplectic-Haantjes manifolds, or $\omega \mathcal{H}$ manifolds, have been introduced in \cite{TT2014}. As we shall see in the subsequent sections, these structures allow to formulate the theory of Hamiltonian integrable systems in a natural geometric language.
\begin{Definition}\label{def:HM}
A symplectic-Haantjes or $\omega \mathcal{H}$ manifold $(M, \omega, \boldsymbol{K}_0, \boldsymbol{K}_1,\ldots,\boldsymbol{K}_{n-1})$ is a symplectic manifold of dimension $2n$, endowed with $n$ endomorphisms of the tangent bundle of~$M$
 \begin{gather*}
 \boldsymbol{K}_\alpha\colon \ TM\mapsto TM ,\qquad \alpha=0,\ldots , n-1,
 \end{gather*}
 which satisfy the following conditions:
\begin{itemize}\itemsep=0pt
 \item
The operator $\boldsymbol{K}_0$ is the identity operator in $TM$
\begin{gather*} 
\boldsymbol{K}_0=\boldsymbol{I} .
\end{gather*}
\item
Their Haantjes tensor vanishes identically, that is
\begin{gather*} 
\mathcal{H}_{\boldsymbol{K}_\alpha}(X,Y)=0, \qquad \forall \, X,Y \in TM, \qquad \alpha=0,\ldots, n-1.
\end{gather*}
\item
The endomorphisms are compatible with $\omega$ (or equivalently, with the corresponding symplectic operator
 $\boldsymbol{\Omega}:=\omega ^\flat$), namely
\begin{gather} \label{eq:LOcomp}
\boldsymbol{K}_\alpha^T\boldsymbol{\Omega}=\boldsymbol{\Omega} \boldsymbol{K}_\alpha, \qquad \alpha=0,\ldots , n-1 ,
\end{gather}
where $\boldsymbol{K}_\alpha^T\colon T^{*}M\mapsto T^{*}M$ is the transposed operator of $\boldsymbol{K}_\alpha$.
\item
The endomorphisms are compatible with each others, in the sense that they form a commutative ring
\begin{gather} \label{eq:Lcomp}
\boldsymbol{K}_{\alpha}\boldsymbol{K}_\beta=\boldsymbol{K}_\beta \boldsymbol{K}_\alpha, \qquad \alpha,\beta=0,\ldots , n-1 ,
\end{gather}
and generate a module $\mathcal{K}$ over the ring of smooth functions on $M$, that is, they satisfy
\begin{gather} \label{eq:LtildeC}
\mathcal{H}_{\big(\sum\limits_{\alpha=0}^{n-1} a_{\alpha}(\boldsymbol{x})\boldsymbol{K}_{\alpha} \big)}(X,Y)= 0
 , \qquad\forall \, X, Y \in TM ,
\end{gather}
where $a_{\alpha}(\boldsymbol{x})$ are arbitrary smooth functions on M.
\end{itemize}
The $(n+1)$-tuple $(\omega, \boldsymbol{K}_0, \boldsymbol{K}_1,\ldots,\boldsymbol{K}_{n-1})$ is called the $\omega \mathcal{H}$ structure associated with the $\omega \mathcal{H}$ manifold, and the module (ring)
$\mathcal{K}$ is called the Haantjes module (ring).
 \end{Definition}

In other words, we require that the endomorphisms $\boldsymbol{K}_\alpha$ and any operator belonging to the module
(ring) $\mathcal{K}$ be a Haantjes operator compatible with $\omega$ and with the original Haantjes operators $\{\boldsymbol{K}_0, \boldsymbol{K}_1,\ldots,\boldsymbol{K}_{n-1}\}$.

\subsection{Lenard--Haantjes chains}
Despite the relevance of Lenard chains in soliton hierarchies, especially in the construction of integrals of motion in involution \cite{Magri78,MagriLE, MM}, their importance in the theory of separation of variables for f\/inite-dimensional Hamiltonian systems has been acknowledged only recently (see \cite{FMT,FP,MLenard,MT, MTlt,MTltC,TT,TKdV,TGalli12}).
The natural extension of the original notion of Lenard chain to the context of the Haantjes geometry is proposed below.
\begin{Definition}\label{defi:LHc}
 Let $(M, \omega, \boldsymbol{K}_0, \boldsymbol{K}_1,\ldots,\boldsymbol{K}_{n-1})$ be a $2n$-dimensional
$\omega \mathcal{H}$ manifold and let\linebreak
$\{H_j\}_{1\le j \le n}$ be $n$ independent functions which satisfy the following relations
\begin{gather*} \label{eq:LHc}
\mathrm{d} H_j=\boldsymbol{K}_\alpha^T \mathrm{d} H, \qquad j=\alpha +1, \qquad \alpha=0,\ldots, n-1, \qquad H:=H_1.
\end{gather*}
Under these conditions, we shall say that the functions $\{H_j\}_{1\le j \le n}$ form a Lenard--Haantjes chain generated by the function~$H$.
\end{Definition}

The relevance of Lenard--Haantjes chains is clarif\/ied by the following

\begin{Proposition}\label{pr:Hinv}
Let $M$ be a $2n$-dimensional $\omega \mathcal{H}$ manifold and $\{H_j\}_{1\le j \le n}$ be $n$ smooth independent functions forming a Lenard--Haantjes chain. Then, the foliation generated by these functions is Lagrangian. Consequently, each Hamiltonian system with Hamiltonian func\-tions~$H_j$, $1\le j \le n$ is integrable by quadratures.
\end{Proposition}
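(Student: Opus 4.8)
The plan is to reduce the statement to the classical Liouville--Arnold theorem. For this it suffices to prove that the functions $\{H_j\}_{1\le j\le n}$ are pairwise in involution with respect to the Poisson bracket $\{\cdot,\cdot\}$ associated with $\omega$; combined with their assumed independence, this identifies the common level sets as $n$-dimensional Lagrangian leaves, and the quadrature statement then follows from the standard theory.

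The first step is to translate the chain relations into a statement about Hamiltonian vector fields. Writing $X_f:=\boldsymbol{\Omega}^{-1}\mathrm{d}f$ for the Hamiltonian vector field of a function $f$, I rewrite the compatibility condition \eqref{eq:LOcomp} in the intertwining form $\boldsymbol{\Omega}^{-1}\boldsymbol{K}_\alpha^T=\boldsymbol{K}_\alpha\boldsymbol{\Omega}^{-1}$, obtained by multiplying $\boldsymbol{K}_\alpha^T\boldsymbol{\Omega}=\boldsymbol{\Omega}\boldsymbol{K}_\alpha$ on the left and right by $\boldsymbol{\Omega}^{-1}$. Applying this to the defining relation $\mathrm{d}H_{\alpha+1}=\boldsymbol{K}_\alpha^T\mathrm{d}H$ of a Lenard--Haantjes chain gives the key identity
\[
X_{H_{\alpha+1}}=\boldsymbol{\Omega}^{-1}\boldsymbol{K}_\alpha^T\mathrm{d}H=\boldsymbol{K}_\alpha\,\boldsymbol{\Omega}^{-1}\mathrm{d}H=\boldsymbol{K}_\alpha X_H,
\]
so that every Hamiltonian vector field of the chain is produced from the single field $X_H$ by applying the corresponding operator $\boldsymbol{K}_\alpha$.

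The second step is the involution computation, which is the substantive part. I use \eqref{eq:LOcomp} a second time, now in the equivalent guise $\omega(\boldsymbol{K}_\alpha X,Y)=\omega(X,\boldsymbol{K}_\alpha Y)$, i.e.\ each $\boldsymbol{K}_\alpha$ is $\omega$-self-adjoint, together with the commutativity \eqref{eq:Lcomp}. For $i=\alpha+1$, $j=\beta+1$ I compute
\[
\{H_i,H_j\}=\omega\big(X_{H_i},X_{H_j}\big)=\omega\big(\boldsymbol{K}_\alpha X_H,\boldsymbol{K}_\beta X_H\big)=\omega\big(X_H,\boldsymbol{K}_\alpha\boldsymbol{K}_\beta X_H\big).
\]
Interchanging the roles of $i$ and $j$ and invoking $\boldsymbol{K}_\alpha\boldsymbol{K}_\beta=\boldsymbol{K}_\beta\boldsymbol{K}_\alpha$ shows that $\{H_i,H_j\}$ is symmetric under $i\leftrightarrow j$, whereas the skew-symmetry of $\omega$ makes it antisymmetric; hence $\{H_i,H_j\}=0$ for all $i,j$. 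It is worth noting that this argument relies only on the $\omega$-compatibility \eqref{eq:LOcomp} and the commutativity \eqref{eq:Lcomp}: the vanishing of the Haantjes tensors and the module condition \eqref{eq:LtildeC}, although part of the $\omega\mathcal{H}$ structure, play no role in this particular statement.

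The final step is routine. Since the $H_j$ are independent, their differentials are pointwise linearly independent, so the joint level sets form a foliation with $n$-dimensional leaves whose tangent spaces are spanned by the $X_{H_j}$; the involution just proved gives $\omega(X_{H_i},X_{H_j})=0$, so these leaves are isotropic and therefore Lagrangian in the $2n$-dimensional symplectic manifold. The assertion on integrability by quadratures is then precisely the Liouville--Arnold theorem applied to the $n$ independent, pairwise-commuting Hamiltonians $H_j$. I expect the only genuinely delicate point to be the correct double use of \eqref{eq:LOcomp} in the involution computation (the intertwining $\boldsymbol{\Omega}^{-1}\boldsymbol{K}_\alpha^T=\boldsymbol{K}_\alpha\boldsymbol{\Omega}^{-1}$ on the one hand, and the $\omega$-self-adjointness of $\boldsymbol{K}_\alpha$ on the other); the passage from involution to a Lagrangian foliation, and thence to quadratures, is standard.
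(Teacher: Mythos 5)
Your proof is correct and follows essentially the same route as the paper: both reduce the claim to the Arnold--Liouville theorem and establish involution of the $H_j$ by combining the $\omega$-compatibility \eqref{eq:LOcomp} with the commutativity \eqref{eq:Lcomp}, your vector-field computation $\omega(X_H,\boldsymbol{K}_\alpha\boldsymbol{K}_\beta X_H)$ being the dual of the paper's observation that $\boldsymbol{K}_\alpha\boldsymbol{P}\boldsymbol{K}_\beta^T$ is skew-symmetric. Your explicit note that commutativity is needed alongside \eqref{eq:LOcomp} is accurate (the paper cites only the compatibility condition, though its skew-symmetry claim also tacitly uses \eqref{eq:Lcomp}).
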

\begin{proof}
 By virtue of the classical Arnold--Liouville theorem, it is suf\/f\/icient to prove that the functions $H_j$ belonging to a Lenard--Haantjes chain are in involution w.r.t.\ the Poisson bracket def\/ined by the symplectic form~$\omega$. In fact,
 if we denote by $\boldsymbol{P}:=\boldsymbol{\Omega}^{-1}$ the Poisson operator induced by the symplectic form, we obtain
\begin{gather*}
\{H_j, H_k\}=\langle dH_j,\boldsymbol{P} dH_k\rangle =\big\langle \boldsymbol{K}_{\alpha}^T dH,\boldsymbol{P} \boldsymbol{K}^T_{\beta} dH\big\rangle =
\big\langle dH,\boldsymbol{K}_{\alpha} \boldsymbol{P} \boldsymbol{K}_{\beta}^T dH\big\rangle =0 ,
\end{gather*}
as the operator $\boldsymbol{K}_{\alpha} \boldsymbol{P} \boldsymbol{K}_{\beta}^T$ is skew-symmetric by virtue of the compatibility condition \eqref{eq:LOcomp}.
\end{proof}

\begin{Remark} \label{rem:nK}
Given a $\omega \mathcal{H}$ manifold, the purpose of its Haantjes operators is to provide a~Lenard--Haantjes chain of~$n$ integrals of motion in involution. To this end, $n$ independent Haantjes operators are required.
\end{Remark}

\subsection[Darboux--Haantjes coordinates and separation of variables in $\omega \mathcal{H}$ manifolds]{Darboux--Haantjes coordinates and separation of variables\\ in $\boldsymbol{\omega \mathcal{H}}$ manifolds}

By analogy with the classical Darboux coordinates, in Haantjes geometry the Darboux--Haantjes coordinates (DH) are a set of distinguished local symplectic coordinates, which simultaneously diagonalize every Haantjes operator.
\begin{Definition}
Let $(M, \omega, \boldsymbol{K}_0, \boldsymbol{K}_1,\ldots,\boldsymbol{K}_{n-1})$ be a $\omega \mathcal{H}$ manifold. A set of local coordinates $(q_1,\ldots,q_n ; p_1,\ldots,p_n)$ will be said to be a set of Darboux--Haantjes (DH) coordinates if the symplectic form in these coordinates assumes the \emph{Darboux form}
\begin{gather*}
\omega=\sum _{i=1}^n \mathrm{d} p_i \wedge \mathrm{d} q_i
 \end{gather*}
and each Haantjes operator diagonalizes:
\begin{gather*}
\boldsymbol{K}_\alpha=\sum _{i=1}^n l_{i } ^{(\alpha)}( \boldsymbol{q},\boldsymbol{p})\left(\frac{\partial}{\partial q_i}\otimes \mathrm{d} q_i + \frac{\partial}{\partial p_i}\otimes \mathrm{d} p_i\right ), \qquad \alpha=0,\ldots,n-1,
 \end{gather*}
with $l_{i}^{(0)}=1$, $i=1,\ldots,n$.
\end{Definition}

In~\cite{TT2014} we have shown that a semisimple Haantjes structure which admits a maximal generator, that is a cyclic Haantjes operator with~$n$ distinct eigenvalues, provides DH coordinates. They turn out to be \textit{separation variables} for each Hamiltonian function belonging to an asso\-ciated Lenard--Haantjes chain. In particular, see Theorems~57 and~59 of~\cite{TT2014} for the statement and the proof of the main results concerning the existence of separation variables for~$\omega \mathcal{H}$ manifolds. By virtue of these general theorems, such structures take the form
 \begin{gather} \label{eq:LSoV}
\boldsymbol{K}_\alpha=\sum _{i=1}^n \frac{\frac{\partial H_{\alpha +1}}{\partial p_i}}{ \frac{\partial H_1}{\partial p_i}}\left(\frac{\partial}{\partial q_i}\otimes \mathrm{d} q_i +\frac{\partial}{\partial p_i}\otimes \mathrm{d} p_i \right ), \qquad \alpha=0,\ldots,n-1 ,
\end{gather}
in any set of separable Darboux coordinates $(\boldsymbol{q},\boldsymbol{p})$ for the Hamiltonian functions~$H_\alpha$, $\alpha=0,\ldots,n-1$.

\section{Generalized St\"ackel systems}\label{section3}

The purpose of this section is to determine the Haantjes structure for a huge class of Hamiltonian systems of St\"ackel type, that we shall call the generalized St\"ackel systems. Specializing conveniently their St\"ackel matrix and functions, we can also derive, in a direct way, both families of quasi-bi-Hamiltonian and classical St\"ackel systems.
\begin{Proposition}[generalized St\"ackel systems] \label{Stackelprop}
 Let us consider the Hamiltonian functions~{\rm \cite{AKN}}
\begin{gather} \label{eq:HStack}
H_j=\sum_{k=1}^n\frac{\tilde{S}_{jk}}{\det (\boldsymbol{S})} f_k(q_k,p_k), \qquad j=1,\dots,n ,
 \end{gather}
where $S_{ij}$ are the elements of a St\"ackel matrix $\boldsymbol{S}(\boldsymbol{q})$ $($i.e., an invertible matrix whose i-th row depends on the coordinate $q_i$ only$)$, and $\tilde{S}_{jk}$ denotes the cofactor of the element~$S_{kj}$.
They belong to the Lenard--Haantjes chain
\begin{gather} \label{eq:HcS}
\boldsymbol{K}_{j-1} ^T\mathrm{d} H_1=\mathrm{d} H_{j},\qquad j=1,\ldots,n ,
\end{gather}
where $\boldsymbol{K}_{j-1}$ are the Haantjes operators defined by
\begin{gather} \label{eq:StackHaan}
\boldsymbol{K}_{j-1}:=\sum_{r=1}^n \frac{ \tilde{S}_{jr}}{ \tilde{S}_{1r}} \left( \frac{\partial}{\partial q_r}\otimes \mathrm{d} q_r+ \frac{\partial}{\partial p_r}\otimes \mathrm{d} p_r\right),
 \qquad j=1,\ldots,n .
\end{gather}
\end{Proposition}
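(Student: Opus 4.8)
The plan is to reduce the whole statement to two elementary identities for the partial derivatives of the Hamiltonians $H_j$, and then to dispose of the tensorial (Haantjes) condition geometrically via Theorem~\ref{th:Haan}. I would begin by rewriting \eqref{eq:HStack} in its separated form. Since $\tilde{S}_{jk}/\det(\boldsymbol{S})=(\boldsymbol{S}^{-1})_{jk}$, the definition \eqref{eq:HStack} reads $H_j=\sum_k (\boldsymbol{S}^{-1})_{jk}f_k(q_k,p_k)$, which inverts to the St\"ackel separation relations
\[
f_r(q_r,p_r)=\sum_{j=1}^n S_{rj}(q_r)\,H_j(\boldsymbol{q},\boldsymbol{p}),\qquad r=1,\dots,n.
\]
These are the natural objects to differentiate, because both the left-hand side and the coefficients $S_{rj}$ depend on the single variable $q_r$.

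Next I would differentiate these $n$ relations with respect to each $p_s$ and each $q_s$. As $f_r$ and the row $S_{rj}$ depend on $q_r$ alone, differentiation in $p_s$ gives $\sum_j S_{rj}\,\partial H_j/\partial p_s=\delta_{rs}\,\partial f_r/\partial p_r$, while differentiation in $q_s$ gives $\sum_j S_{rj}\,\partial H_j/\partial q_s=\delta_{rs}\,g_s$ with $g_s:=\partial f_s/\partial q_s-\sum_j S'_{sj}(q_s)H_j$. Regarding these as linear systems in the unknowns $\partial H_j/\partial x_s$ and inverting with $\boldsymbol{S}^{-1}$ yields
\[
\frac{\partial H_j}{\partial p_s}=(\boldsymbol{S}^{-1})_{js}\,\frac{\partial f_s}{\partial p_s},\qquad \frac{\partial H_j}{\partial q_s}=(\boldsymbol{S}^{-1})_{js}\,g_s.
\]
The decisive feature is that the $s$-dependent factor ($\partial f_s/\partial p_s$ or $g_s$) carries no index $j$, so in the ratio against $H_1$ it cancels, leaving $\partial H_j/\partial x_s=(\tilde{S}_{js}/\tilde{S}_{1s})\,\partial H_1/\partial x_s$ for $x=q$ and $x=p$. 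Since the operator $\boldsymbol{K}_{j-1}$ of \eqref{eq:StackHaan} is diagonal in the frame $\{\partial/\partial q_r,\partial/\partial p_r\}$ with eigenvalue $\tilde{S}_{jr}/\tilde{S}_{1r}$ on the $r$-th conjugate pair, its transpose multiplies both $\mathrm{d}q_r$ and $\mathrm{d}p_r$ by that eigenvalue; these two proportionalities are therefore exactly the componentwise content of the chain relation $\boldsymbol{K}_{j-1}^T\mathrm{d}H_1=\mathrm{d}H_j$ in \eqref{eq:HcS}. Equivalently, one recognizes \eqref{eq:StackHaan} as the specialization of the general form \eqref{eq:LSoV}.

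For the Haantjes condition I would argue geometrically rather than through the coordinate formula for $\mathcal{H}_{\boldsymbol{K}}$. Each $\boldsymbol{K}_{j-1}$ is diagonal in the natural frame of the chart $(\boldsymbol{q},\boldsymbol{p})$, so its generalized eigen-distributions are spanned by subsets of the coordinate vector fields and are thus Frobenius integrable, as are all their direct sums; by Theorem~\ref{th:Haan} this forces $\mathcal{H}_{\boldsymbol{K}_{j-1}}=0$. The same observation disposes of the remaining $\omega\mathcal{H}$ requirements at once: diagonality in a coordinate frame makes the operators pairwise commuting, any function-linear combination of them is again diagonal and hence Haantjes (which is \eqref{eq:LtildeC}), and the coincidence of the eigenvalues on each conjugate pair $(q_r,p_r)$ is precisely the compatibility \eqref{eq:LOcomp} with the canonical symplectic form.

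I expect the only genuine obstacle to be the $q$-derivative identity. A direct differentiation of the cofactors $\tilde{S}_{jk}$ in \eqref{eq:HStack} is unwieldy, and the efficient route is instead to differentiate the separation relations, so that the awkward term $g_s$ (which collects $\partial f_s/\partial q_s$ together with the derivatives $S'_{sj}$) appears in a form that manifestly cancels in the ratio against $H_1$, exactly as $\partial f_s/\partial p_s$ does in the easier $p$-derivative case.
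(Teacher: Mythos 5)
Your argument is correct, but it follows a genuinely different (much more self-contained) route than the paper's. The paper proves the proposition in one line: the operators \eqref{eq:StackHaan} are declared to be the specialization of the general formula \eqref{eq:LSoV} to the St\"ackel Hamiltonians \eqref{eq:HStack}, so that the chain relation \eqref{eq:HcS}, the vanishing of the Haantjes tensors and the remaining $\omega\mathcal{H}$ axioms are all inherited from Theorems~57 and~59 of \cite{TT2014}, which in turn rest on the classical fact that the $H_j$ of \eqref{eq:HStack} are in separable involution in the coordinates $(\boldsymbol{q},\boldsymbol{p})$. You instead rebuild that machinery from scratch: differentiating the separation relations $f_r(q_r,p_r)=\sum_j S_{rj}(q_r)H_j$ is exactly the right move, since it exhibits $\partial H_j/\partial p_s$ and $\partial H_j/\partial q_s$ as $(\boldsymbol{S}^{-1})_{js}$ times a $j$-independent factor, and the resulting componentwise proportionalities are precisely \eqref{eq:HcS}; your treatment of the $q$-derivative (the term $g_s$ collecting $\partial f_s/\partial q_s$ and the $S'_{sj}H_j$) supplies explicitly the step that the paper leaves buried inside the citation. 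Your disposal of the Haantjes condition via the converse part of Theorem~\ref{th:Haan} (coordinate-diagonal, hence semisimple with integrable eigen-distributions and integrable direct sums) is legitimate and matches the paper's standing convention that operators diagonal in a natural frame are automatically Haantjes, modulo the same constant-rank caveat the paper itself glosses over; likewise both arguments tacitly assume $\tilde S_{1r}\neq 0$ so that \eqref{eq:StackHaan} is defined. In short, what you buy is independence from the external Theorems~57/59 and an explicit verification of both components of the chain; what the paper buys is brevity and the presentation of this proposition as a direct corollary of its general separation-of-variables theory.
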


\begin{proof}
The operators \eqref{eq:StackHaan} are a special case of the operators~\eqref{eq:LSoV} for the St\"ackel Hamiltonian functions~\eqref{eq:HStack}.
\end{proof}

\begin{Remark}
It is known that the St\"ackel matrix $\boldsymbol{S}(\boldsymbol{q})$ is not unique. In fact, multiplying the $i$-th row of a given St\"ackel matrix for an arbitrary function $F_i(q_i)$, one obtains a~dif\/ferent St\"ackel matrix for the same coordinate web. However, it should be noted that, although the Hamiltonian functions~\eqref{eq:HStack} transform into
\begin{gather*}
H_j \mapsto \sum_{k=1}^n\frac{\tilde{S}_{jk}}{F_k(q_k) \det (\boldsymbol{S})} f_k(q_k,p_k), \qquad j=1,\dots,n ,
\end{gather*}
 the Haantjes operators~\eqref{eq:StackHaan} stay \emph{invariant}, as their eigenvalues turn out to be
\begin{gather*}
\frac{ \tilde{S}_{jr}}{\cancel{F_r(q_r) }}\frac{\cancel{ F_r(q_r)}}{ \tilde{S}_{1r}} \equiv \frac{ \tilde{S}_{jr}}{ \tilde{S}_{1r}} .
\end{gather*}
For this reason, we could say that the Haantjes operators \eqref{eq:StackHaan} are the tensorial representation of a given St\"ackel web in~$T^*\mathcal{Q}$.
\end{Remark}

\begin{Remark}
The Haantjes operators~\eqref{eq:StackHaan} are independent of the functions $f_k(q_k,p_k)$, that appear in the Hamiltonians~\eqref{eq:HStack} only. They are called the \emph{St\"ackel} functions and are characteristic functions of the Haantjes web, according to~\cite{TT2014}.
\end{Remark}

By choosing as St\"ackel functions $f_k=\psi_k(p_k)$, $ f_k=W_k(q_k)$ and $f_k=\psi_k(p_k)+W_k(q_k)$, where $\psi_k(p_k)$ and $ W_k(q_k)$ are arbitrary smooth functions of their argument, from equation~\eqref{eq:HcS} and Proposition~\ref{pr:Hinv}, we obtain the following result.
\begin{Corollary}
The functions
\begin{gather} \label{eq:StackTV}
T_j:=\sum_{k=1}^n\frac{ \tilde{S}_{jk}}{\det (\boldsymbol{S})} \psi_k(p_k) , \qquad
 V_j:=\sum_{k=1}^n\frac{ \tilde{S}_{jk}}{\det (S)} W_k(q_k)
 \qquad j=1,\dots,n ,
\end{gather}
are elements of the Haantjes chains
\begin{gather} \label{eq:HcTV}
\boldsymbol{K}_{j-1}^T \mathrm{d} T_1=\mathrm{d} T_{j} , \qquad \boldsymbol{K}_{j-1} ^T\mathrm{d} V_1=\mathrm{d} V_{j} ,\qquad j=1,\ldots,n .
\end{gather}
Therefore, they fulfill the involution relations
\begin{gather} \label{eq:invTV}
\{ T_i,T_j\}=0 , \qquad \{V_i,V_j\}=0 , \qquad\{T_i,V_j\}+\{V_i,T_j\}=0 , \qquad i,j=1,\ldots, n .
\end{gather}
\end{Corollary}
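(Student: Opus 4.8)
The plan is to recognize that the functions $T_j$ and $V_j$ in \eqref{eq:StackTV} are nothing but the specializations of the generalized St\"ackel Hamiltonians \eqref{eq:HStack} obtained by choosing the St\"ackel functions $f_k=\psi_k(p_k)$ and $f_k=W_k(q_k)$ respectively. First I would invoke Proposition~\ref{Stackelprop} directly: since the Haantjes operators $\boldsymbol{K}_{j-1}$ in \eqref{eq:StackHaan} are independent of the choice of the $f_k$ (as stressed in the Remarks following that Proposition), the same operators serve both families. Applying \eqref{eq:HcS} with $f_k=\psi_k(p_k)$ yields the first chain relation $\boldsymbol{K}_{j-1}^T\mathrm{d} T_1=\mathrm{d} T_j$, and with $f_k=W_k(q_k)$ it yields $\boldsymbol{K}_{j-1}^T\mathrm{d} V_1=\mathrm{d} V_j$, which is precisely \eqref{eq:HcTV}.

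Next I would derive the involution relations \eqref{eq:invTV}. For the first two, I would apply Proposition~\ref{pr:Hinv}: the $T_j$ form a Lenard--Haantjes chain, hence are pairwise in involution, $\{T_i,T_j\}=0$, and likewise the $V_j$ give $\{V_i,V_j\}=0$. For the mixed relation, the cleanest route is to consider the third choice of St\"ackel function, $f_k=\psi_k(p_k)+W_k(q_k)$, which by linearity of \eqref{eq:HStack} in $f_k$ produces the Hamiltonians $H_j=T_j+V_j$. These again form a Lenard--Haantjes chain for the same operators $\boldsymbol{K}_{j-1}$, so by Proposition~\ref{pr:Hinv} they too are in involution, $\{T_i+V_i,\,T_j+V_j\}=0$. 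Expanding this bracket bilinearly gives $\{T_i,T_j\}+\{V_i,V_j\}+\{T_i,V_j\}+\{V_i,T_j\}=0$; since the first two terms already vanish, the mixed symmetric combination $\{T_i,V_j\}+\{V_i,T_j\}=0$ follows immediately.

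The argument is essentially a bookkeeping exercise built on results already established, so there is no genuine analytic obstacle. The one point deserving care is the justification that the same Haantjes operators \eqref{eq:StackHaan} govern all three families simultaneously; this is exactly the content of the Remark asserting that \eqref{eq:StackHaan} is independent of the $f_k$, so the three chains share the common operators $\boldsymbol{K}_{j-1}$. A secondary subtlety is that Proposition~\ref{pr:Hinv} requires the chain functions to be independent; for generic smooth $\psi_k$ and $W_k$ this holds, and I would note that the involution identities \eqref{eq:invTV} are polynomial (indeed bilinear) in the functions and therefore extend by continuity to any limiting degenerate cases. With these observations in place, the corollary is an immediate consequence of Proposition~\ref{Stackelprop}, Proposition~\ref{pr:Hinv}, and the bilinearity of the Poisson bracket.
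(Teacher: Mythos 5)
Your proposal is correct and follows essentially the same route as the paper: the authors obtain the corollary precisely by specializing the St\"ackel functions to $\psi_k(p_k)$, $W_k(q_k)$ and their sum, invoking the chain relation of Proposition~\ref{Stackelprop} together with Proposition~\ref{pr:Hinv}, and the mixed involution relation follows from bilinearity exactly as you describe. No discrepancies to report.
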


Using~\cite[Theorem~57]{TT2014}, stating the existence of generators of a~Haantjes structure, we can derive the following result.
\begin{Proposition}
The Haantjes structure of a $($generalized$)$ St\"ackel system admits as gene\-ra\-tor the Haantjes operator
\begin{gather} \label{eq:KStack}
\boldsymbol{K}:=\sum_{r=1}^n \lambda_r(\boldsymbol{q})\left( \frac{\partial}{\partial q_r}\otimes \mathrm{d} q_r+ \frac{\partial}{\partial p_r}\otimes \mathrm{d} p_r\right) ,
 \end{gather}
 where $\lambda_r(\boldsymbol{q})$ are arbitrary smooth functions of the coordinates $(q_1,\ldots,q_n)$, with $\lambda_r\neq \lambda_j$ at any point of~$\mathcal{Q}$, except
possibly for a closed set.

In fact,
\begin{gather*}\label{eq:Kgen}
 \boldsymbol{K}_{j-1}=\sum _{r=1}^n \frac{ \tilde{S}_{jr}}{ \tilde{S}_{1r}}
 \pi_r(\boldsymbol{K}),
 \qquad j=1,\ldots,n ,
\end{gather*}
where the operators
\begin{gather*}
\pi_r(\boldsymbol{K}):= \frac{\prod\limits_{\stackrel{i=1}{i\neq r}}^n (\boldsymbol{K}-\lambda_i\boldsymbol{I})}
{\prod\limits_{\stackrel{i=1}{i\neq r}}^n (\lambda_r-\lambda_i)}
= \frac{\partial}{\partial q_r}\otimes \mathrm{d} q_r+ \frac{\partial}{\partial p_r}\otimes \mathrm{d} p_r, \qquad r=1,\ldots,n ,
\end{gather*}
are the elements of the so-called Lagrange interpolation basis $\mathcal{B}_{\rm int}=\{ \pi_1(\boldsymbol{K}), \ldots,\pi_n(\boldsymbol{K})\}$ associated to the operator $\boldsymbol{K}$, and represent a basis of the Haantjes module $\mathcal{K}$.
\end{Proposition}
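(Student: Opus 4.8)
The plan is to exhibit $\boldsymbol{K}$ explicitly as a semisimple Haantjes operator with $n$ functionally independent eigenvalues, and then to recover the whole chain $\{\boldsymbol{K}_{j-1}\}$ from its associated Lagrange interpolation basis; this identifies $\boldsymbol{K}$ as a concrete realization of the generator whose existence is guaranteed by \cite[Theorem~57]{TT2014}.

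First I would verify that the operator $\boldsymbol{K}$ of \eqref{eq:KStack} is genuinely a Haantjes operator. Since $\boldsymbol{K}$ is diagonal in the Darboux coordinate frame $\big\{\tfrac{\partial}{\partial q_r},\tfrac{\partial}{\partial p_r}\big\}$, I would write $l_i$ for its $2n$ diagonal entries (so that $l_i\in\{\lambda_1,\dots,\lambda_n\}$, each value occurring twice) and compute the Haantjes tensor directly from the coordinate formulas in Section~\ref{section2}. A short calculation reduces it to the identity $(\mathcal{H}_{\boldsymbol{K}})^i_{jk}=(l_i-l_j)(l_i-l_k)(\mathcal{T}_{\boldsymbol{K}})^i_{jk}$, while the Nijenhuis torsion of a diagonal operator is supported only on terms carrying a factor $\delta^i_k(l_j-l_i)$ or $\delta^i_j(l_i-l_k)$; in each surviving contribution one of the prefactors $(l_i-l_k)$ or $(l_i-l_j)$ vanishes, so $\mathcal{H}_{\boldsymbol{K}}\equiv 0$. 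Hence every operator diagonal in a coordinate frame is automatically Haantjes, and compatibility with the Darboux form $\omega=\sum_i \mathrm{d} p_i\wedge\mathrm{d} q_i$ in the sense of \eqref{eq:LOcomp} follows at once from the fact that the same eigenvalue $\lambda_r(\boldsymbol{q})$ sits on the $q_r$- and $p_r$-directions, which gives $\boldsymbol{K}^T\boldsymbol{\Omega}=\boldsymbol{\Omega}\boldsymbol{K}$.

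Next I would compute the interpolation operators $\pi_r(\boldsymbol{K})$. Away from the closed set where some $\lambda_r$ coincide, the hypothesis $\lambda_r\neq\lambda_j$ makes $\boldsymbol{K}$ semisimple with $n$ distinct eigenvalues, each of multiplicity two, and its generalized eigen-distribution is $\mathcal{D}_r=\operatorname{span}\big\{\tfrac{\partial}{\partial q_r},\tfrac{\partial}{\partial p_r}\big\}$. The Lagrange cardinal polynomials $\ell_r(x)=\prod_{i\neq r}(x-\lambda_i)/(\lambda_r-\lambda_i)$ satisfy $\ell_r(\lambda_s)=\delta_{rs}$, so by the functional calculus for a semisimple operator $\pi_r(\boldsymbol{K})=\ell_r(\boldsymbol{K})$ acts as the identity on $\mathcal{D}_r$ and annihilates every $\mathcal{D}_s$ with $s\neq r$. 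This yields exactly $\pi_r(\boldsymbol{K})=\tfrac{\partial}{\partial q_r}\otimes\mathrm{d} q_r+\tfrac{\partial}{\partial p_r}\otimes\mathrm{d} p_r$, as claimed.

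Finally, substituting this identity into the definition \eqref{eq:StackHaan} gives the decomposition $\boldsymbol{K}_{j-1}=\sum_{r=1}^n(\tilde{S}_{jr}/\tilde{S}_{1r})\,\pi_r(\boldsymbol{K})$, showing that every operator of the chain, and hence every element of the module $\mathcal{K}$ generated by $\{\boldsymbol{K}_0,\dots,\boldsymbol{K}_{n-1}\}$, is a smooth-function combination of the $\pi_r(\boldsymbol{K})$. Since the $n$ projectors $\pi_r(\boldsymbol{K})$ are linearly independent (they project onto complementary distributions), they constitute the interpolation basis $\mathcal{B}_{\rm int}$ of $\mathcal{K}$, and $\boldsymbol{K}$ is therefore a generator. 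I expect the only delicate point to be the bookkeeping in the functional-calculus step, where each eigenvalue carries multiplicity two; this causes no difficulty because the minimal polynomial of $\boldsymbol{K}$ still has the simple roots $\lambda_1,\dots,\lambda_n$, so the cardinal-polynomial identity $\ell_r(\lambda_s)=\delta_{rs}$ applies verbatim and projects onto the full two-dimensional eigenspaces.
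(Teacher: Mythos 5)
Your proposal is correct, and it is in fact more explicit than what the paper provides: the paper states this Proposition without a proof environment, justifying it only by an appeal to Theorem~57 of~\cite{TT2014} (existence of a generator for a semisimple Haantjes structure) and by displaying the interpolation formula itself. Your route replaces that citation with a direct verification. The key computation --- that for an operator diagonal in a coordinate frame one has $(\mathcal{H}_{\boldsymbol{K}})^i_{jk}=(l_i-l_j)(l_i-l_k)(\mathcal{T}_{\boldsymbol{K}})^i_{jk}$, while the torsion of a diagonal operator is supported on $\delta^i_k(l_j-l_i)$ and $\delta^i_j(l_i-l_k)$ terms, so the Haantjes tensor vanishes identically --- is correct and also quietly disposes of condition~\eqref{eq:LtildeC}, since any smooth-function combination of the projectors $\pi_r(\boldsymbol{K})$ is again diagonal in the same frame. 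Your handling of the double multiplicity via the minimal polynomial is right: $m_{\boldsymbol{K}}$ has the $\lambda_r$ as simple roots, so $\ell_r(\boldsymbol{K})$ projects onto the full two-dimensional eigenspace $\operatorname{span}\{\partial_{q_r},\partial_{p_r}\}$, and substituting into~\eqref{eq:StackHaan} gives the claimed decomposition. The only point you leave slightly implicit is the converse inclusion needed for the word ``basis'': to know that the $\pi_r(\boldsymbol{K})$ span exactly the module generated by $\{\boldsymbol{K}_0,\dots,\boldsymbol{K}_{n-1}\}$ (rather than a larger one) you should note that the coefficient matrix $\big(\tilde{S}_{jr}/\tilde{S}_{1r}\big)_{j,r}$ is invertible, which follows from the invertibility of the St\"ackel matrix and hence of its cofactor matrix. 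What your approach buys is self-containedness and an elementary explanation of \emph{why} diagonalizability in a single chart forces all the Haantjes tensors to vanish; what the paper's approach buys is brevity and a direct link to the general existence theorem, of which this Proposition is then just an explicit instance.
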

\begin{Remark}
The representation of the Haantjes operators~\eqref{eq:StackHaan} on the cyclic basis $\mathcal{B}_{\rm cycl}=\{ \boldsymbol{I},\boldsymbol{K},\boldsymbol{K}^2,\ldots, \boldsymbol{K}^{n-1}\}$ associated to the Haantjes operator \eqref{eq:KStack} can be obtained by observing that the transition matrix between the cyclic basis and the interpolation basis is given by the Vandermonde matrix of the eigenvalues of \eqref{eq:KStack}
\begin{gather*}
[\boldsymbol{I} ]_{\mathcal{B}_{\rm cycl}} ^{ \mathcal{B}_{\rm int}} =\boldsymbol{V}(\boldsymbol{q})=\begin{bmatrix}
 1 & \lambda_1^{2} &\ldots&\lambda_1^{n-1} \\
 1& \lambda_2^{2} &\ldots& \lambda_2^{n-1} \\
 \ldots & \ldots & \ldots & \ldots \\
 1 & \lambda_n^{2} &\ldots& \lambda_n^{n-1}
\end{bmatrix} .
\end{gather*}
Thus, the Haantjes operators \eqref{eq:StackHaan} can be written as polynomial f\/ields in the powers of $\boldsymbol{K}$
\begin{gather} \label{eq:pK}
\boldsymbol{K}_{j-1}=p_{j-1} (\boldsymbol{q},\boldsymbol{K})=\sum_{i =0} ^{n-1} a_{i+1}^{(j-1)}(\boldsymbol{q}) \boldsymbol{K}^i, \qquad j=1,\ldots,n ,
\end{gather}
where
\begin{gather*}
\begin{bmatrix}
 a_1^{(j-1)} \\
 a_2^{(j-1)} \\
 \ldots \\
 a_n^{(j-1)} \
 \end{bmatrix}
 ^{\mathcal{B}_{\rm cycl}}
= \boldsymbol{V}^{-1}
\begin{bmatrix}
 \frac{ \tilde{S}_{j1}}{ \tilde{S}_{11}} \\
 \frac{ \tilde{S}_{j2}}{ \tilde{S}_{12} } \\
 \ldots \\
 \frac{ \tilde{S}_{jn}}{ \tilde{S}_{1n} }
\end{bmatrix}
^{\mathcal{B}_{\rm int}}
 .
\end{gather*}
Another basis of interest for the sequel is the so-called \emph{control basis} $\mathcal{B}_{\rm cont}=\{ e_1(\boldsymbol{K}), \ldots,e_n(\boldsymbol{K})\}$ associated with the operator $\boldsymbol{K}$ (see, for instance, \cite[p.~98]{Fur}). Its elements (in reverse order) are def\/ined by
\begin{gather}
e_1(\boldsymbol{K}) = \boldsymbol{I},\nonumber \\
e_2(\boldsymbol{K}) = - c_{1}\boldsymbol{I}+\boldsymbol{K}, \nonumber\\
 \cdots \cdots\cdots\cdots\cdots\cdots\cdots\cdots\cdots \nonumber\\
 e_n(\boldsymbol{K}) = - c_{n-1}\boldsymbol{I}-c_{n-2}\boldsymbol{K}-\cdots-c_1\boldsymbol{K}^{n-2}
+\boldsymbol{K}^{n-1},\label{eq:ContBase}
\end{gather}
where the functions $c_1(\boldsymbol{q}), \ldots, c_n(\boldsymbol{q})$ are the (opposite of the) coef\/f\/icients of the minimal polynomial of $\boldsymbol{K}$
\begin{gather} \label{eq:mN}
m_{\boldsymbol{K}} (\lambda) =\lambda^n-c_1 \lambda^{n-1}-\cdots -c_{n-1}\lambda-c_n .
\end{gather}
Thus, these coef\/f\/icients are related to the elementary symmetric functions $\sigma_k$ of the roots of~\eqref{eq:mN}, namely the $n$ eigenvalues $( \lambda_1, \lambda_2,\ldots, \lambda_n)$ of~\eqref{eq:KStack}, by the formulae
\begin{gather*}
 c_k:= (-1)^{k+1} \sigma_k .
\end{gather*}
The transition matrix between the control basis and the cyclic basis is given by
\begin{gather*}
 \boldsymbol{H}_R=\left[
\begin{matrix}
1&-c_1&\cdots&\cdots&-c_{n-1}\\
0&\ddots&\ddots&\ddots& \vdots\\
0&\ddots&\ddots&\ddots&\vdots \\
\vdots&\ddots&\ddots&\ddots&-c_1 \\
0&\cdots&0&0&1 \\
\end{matrix} \right] ,
\end{gather*}
which can be regarded as a Hankel matrix in a disguised form. We can conclude that the transition matrix between the control basis and the interpolating basis is simply given by the product
\begin{gather}\label{eq:VH}
[\boldsymbol{I}] _{\mathcal{B}_{\rm cont}}^{\mathcal{B}_{\rm int}} =\boldsymbol{V}\boldsymbol{H}_R .
\end{gather}
\end{Remark}
To relate our approach with the classical theory of St\"{ackel} about SoV, based on transformations of coordinates in the conf\/iguration space $\mathcal{Q}$, we need to study which of the Haantjes structures can be projected along the f\/ibers of
$T^*\mathcal{Q}$ by means of the canonical projection map
$\pi\colon T^*\mathcal{Q}\rightarrow \mathcal{Q}$, $(\boldsymbol{q},\boldsymbol{p})\mapsto \boldsymbol{q}$. The following results hold true.
\begin{Proposition} \label{pr:Kp}
The Haantjes operators \eqref{eq:StackHaan} can be projected along the fibers $T^*\mathcal{Q}$ onto the operators
\begin{gather}\label{eq:Kp}
\tilde{\boldsymbol{K}}_{j-1}:=\sum_{r=1}^n\frac{ \tilde{S}_{jr}}{ \tilde{S}_{1r}}
 \frac{\partial}{\partial q_r}\otimes \mathrm{d} q_r, \qquad j=1,\ldots,n-1 ,
\end{gather}
that are still Haantjes operators in the configuration space $\mathcal{Q}$ and are compatible with each other, that is, fulfill the relations \eqref{eq:Lcomp}, \eqref{eq:LtildeC}. Moreover, the Haantjes generator~\eqref{eq:KStack} as well can be projected onto the operator
\begin{gather}\label{eq:Np}
\tilde{\boldsymbol{K}}=\sum_{r=1}^n \lambda_r(\boldsymbol{q}) \frac{\partial}{\partial q_r}\otimes \mathrm{d} q_r, \qquad j=1,\ldots,n-1 .
\end{gather}
Such an operator is a Haantjes operator in $\mathcal{Q}$ and generates the Haantjes operators \eqref{eq:Kp} according to the relations
\begin{gather*}
\tilde{\boldsymbol{K}}_{j-1}=\sum _{r=1}^n \frac{ \tilde{S}_{jr}}{ \tilde{S}_{1r}}
 \pi_r\big(\tilde{\boldsymbol{K}}\big)=\sum_{i =0} ^{n-1} a_i^{(j-1)}(\boldsymbol{q}) \tilde{\boldsymbol{K}}^i,
 \qquad j=1,\ldots,n .
\end{gather*}
\end{Proposition}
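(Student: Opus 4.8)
The plan is to exploit the one structural feature that makes the projection possible: every coefficient appearing in \eqref{eq:StackHaan} and \eqref{eq:KStack}, namely the ratios $\tilde{S}_{jr}/\tilde{S}_{1r}$ and the eigenvalues $\lambda_r(\boldsymbol{q})$, is a function of the base coordinates $\boldsymbol{q}$ alone, since the St\"ackel matrix $\boldsymbol{S}(\boldsymbol{q})$ and hence all its cofactors are $\boldsymbol{p}$-independent. First I would show that each $\boldsymbol{K}_{j-1}$ is projectable with respect to $\pi\colon T^*\mathcal{Q}\to\mathcal{Q}$. Writing the vertical distribution $V:=\ker\pi_*=\operatorname{span}\{\partial/\partial p_r\}$, the diagonal form \eqref{eq:StackHaan} gives $\boldsymbol{K}_{j-1}(\partial/\partial p_r)=(\tilde{S}_{jr}/\tilde{S}_{1r})\,\partial/\partial p_r\in V$, so $\boldsymbol{K}_{j-1}$ preserves $V$; moreover the induced endomorphism of $TM/V\cong\pi^*T\mathcal{Q}$ has the $\boldsymbol{p}$-independent, hence $\pi$-basic, entries $\tilde{S}_{jr}/\tilde{S}_{1r}$. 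These two facts are exactly the condition for $\boldsymbol{K}_{j-1}$ to descend to a well-defined $(1,1)$ tensor $\tilde{\boldsymbol{K}}_{j-1}$ on $\mathcal{Q}$ obeying the intertwining relation $\pi_*\circ\boldsymbol{K}_{j-1}=\tilde{\boldsymbol{K}}_{j-1}\circ\pi_*$, and reading off the horizontal block reproduces \eqref{eq:Kp}. The same argument applied to \eqref{eq:KStack} produces \eqref{eq:Np}.

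Next I would establish that the Haantjes tensor is natural under this projection. Since the Lie bracket of two $\pi$-projectable vector fields is again projectable and projects to the bracket of the projected fields, and since $\pi_*\circ\boldsymbol{K}_{j-1}=\tilde{\boldsymbol{K}}_{j-1}\circ\pi_*$, every term in the definition of $\mathcal{T}_{\boldsymbol{K}_{j-1}}$ intertwines with the corresponding term for $\tilde{\boldsymbol{K}}_{j-1}$; hence $\mathcal{T}_{\boldsymbol{K}_{j-1}}$ is projectable and $\pi_*\mathcal{T}_{\boldsymbol{K}_{j-1}}(X,Y)=\mathcal{T}_{\tilde{\boldsymbol{K}}_{j-1}}(\pi_*X,\pi_*Y)$ for projectable $X,Y$. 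As the Haantjes tensor is built from $\mathcal{T}$ and the operator by the same kind of tensorial operations, the analogous identity $\pi_*\mathcal{H}_{\boldsymbol{K}_{j-1}}(X,Y)=\mathcal{H}_{\tilde{\boldsymbol{K}}_{j-1}}(\pi_*X,\pi_*Y)$ follows. Because the operators \eqref{eq:StackHaan} are Haantjes operators on $T^*\mathcal{Q}$ by Proposition~\ref{Stackelprop}, the right-hand side vanishes for every pair of horizontal lifts; since $\pi_*$ is fiberwise surjective, $\mathcal{H}_{\tilde{\boldsymbol{K}}_{j-1}}\equiv 0$ on $\mathcal{Q}$. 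Concretely, the same conclusion is visible in coordinates: for indices all lying in the $\boldsymbol{q}$-range the components of $\mathcal{H}_{\boldsymbol{K}_{j-1}}$ coincide with those of $\mathcal{H}_{\tilde{\boldsymbol{K}}_{j-1}}$, because diagonality collapses the internal sums to the $\boldsymbol{q}$-block and the $\boldsymbol{p}$-derivatives of the entries vanish.

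The compatibility conditions descend by the same principle. Commutativity \eqref{eq:Lcomp} is immediate from intertwining, $\pi_*(\boldsymbol{K}_\alpha\boldsymbol{K}_\beta X)=\tilde{\boldsymbol{K}}_\alpha\tilde{\boldsymbol{K}}_\beta\pi_*X$, together with the commutativity of the $\boldsymbol{K}_\alpha$ upstairs (it is also manifest, as the $\tilde{\boldsymbol{K}}_{j-1}$ are simultaneously diagonal in the frame $\{\partial/\partial q_r\}$). For the module condition \eqref{eq:LtildeC} I would take coefficients $a_\alpha(\boldsymbol{q})$, regard them as $\pi$-basic functions on $T^*\mathcal{Q}$, observe that $\sum_\alpha a_\alpha(\boldsymbol{q})\boldsymbol{K}_\alpha$ is then projectable with projection $\sum_\alpha a_\alpha(\boldsymbol{q})\tilde{\boldsymbol{K}}_\alpha$, and invoke naturality once more: its Haantjes tensor vanishes upstairs by \eqref{eq:LtildeC}, hence vanishes downstairs. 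Finally, for the generator I would use that polynomial expressions intertwine, $\pi_*(\boldsymbol{K}^iX)=\tilde{\boldsymbol{K}}^i\pi_*X$, so the interpolation and cyclic representations project term by term; in particular each Lagrange projector descends to $\pi_r(\tilde{\boldsymbol{K}})=\partial/\partial q_r\otimes\mathrm{d}q_r$, the $r$-th eigenprojection of the simple operator $\tilde{\boldsymbol{K}}$, which yields the stated generation formulae.

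I expect the main obstacle to be the second step, namely justifying the naturality of the Nijenhuis (and hence Haantjes) construction under $\pi$ — in particular the fact that the torsion of a projectable operator is itself projectable. This is precisely where the $\boldsymbol{p}$-independence of the coefficients is indispensable: it is what guarantees both that $V$ is preserved and that the $\boldsymbol{q}$-block has $\pi$-basic entries, so that all the Lie brackets and compositions entering $\mathcal{T}$ close up under $\pi_*$. Once this is secured, the remaining assertions (projectability, the two compatibility relations, and the generation formulae) follow formally.
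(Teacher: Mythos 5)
Your proposal is correct and follows essentially the same route as the paper: the paper's proof is a two-sentence observation that the components $\tilde{S}_{jr}/\tilde{S}_{1r}$ and the eigenvalues $\lambda_r$ depend on $\boldsymbol{q}$ only, hence the operators project and the projected operators inherit \eqref{eq:Lcomp} and \eqref{eq:LtildeC}. You simply spell out the details the paper leaves implicit (preservation of the vertical distribution, naturality of the Nijenhuis and Haantjes tensors under $\pi_*$, term-by-term projection of the polynomial generation formulae), all of which is sound.
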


\begin{proof}
The components $\tilde{S}_{jr}/\tilde{S}_{1r}$ of
 the Haantjes operators \eqref{eq:StackHaan} in $T^*\mathcal{Q}$, as well as the eigenvalues of $\boldsymbol{K}$,
 in the separation coordinates~$(\boldsymbol{q}; \boldsymbol{p})$
 depend on the coordinates~$\boldsymbol{q}$ only. Therefore, the operators~\eqref{eq:StackHaan} and~\eqref{eq:KStack} can be projected along the f\/ibers of~$T^*\mathcal{Q}$. Moreover, the projected operators inherit the proper\-ties~\eqref{eq:Lcomp},~\eqref{eq:LtildeC} from \eqref{eq:StackHaan}.
\end{proof}

\section{Quasi-bi-Hamiltonian systems}\label{section4}

We derive here the Haantjes structure of a large class of separable systems with $n$ degrees of freedom introduced in~\cite{MT},
that includes a Goldf\/ish system by F.~Calogero and the $L$-systems of Benenti (whose discussion is postponed to Section~\ref{section7}).
Geometrically, such systems can be interpreted as reductions of Gelfand--Zakarevich systems of maximal rank to a~symplectic submanifold of a suitable bi-Hamiltonian manifold~\cite{FP}. To this aim, we have to choose each eigenvalue of the Haantjes generator to be dependent only on the homologous coordinate, so that the generator~\eqref{eq:KStack} becomes the following \emph{Nijenhuis} operator
\begin{gather} \label{eq:NStack}
\boldsymbol{N}:=\sum_{r=1}^n \lambda_r(q_r)\left( \frac{\partial}{\partial q_r}\otimes \mathrm{d} q_r+ \frac{\partial}{\partial p_r}\otimes \mathrm{d} p_r\right) .
 \end{gather}
As before, we assume that its eigenvalues $\lambda_r(q_r)$ are arbitrary smooth functions of their argument, with the restriction that $\lambda_r\neq \lambda_j$ at any point of $\mathcal{Q}$, except possibly for a closed set. Accordingly, we can choose as St\"ackel matrix in equation~\eqref{eq:HStack} the (reverse) Vandermonde matrix of the eigenvalues of~\eqref{eq:NStack}
\begin{gather*}
\boldsymbol{S}(\boldsymbol{q})=\boldsymbol{V}_R=
\begin{bmatrix}
 \lambda_1^{n-1} (q_1)& \lambda_1^{n-2} (q_1)
 &\ldots&1 \\
 \lambda_2^{n-1}(q_2)& \lambda_2^{n-2} (q_2)&\ldots&1 \\
 \ldots & \ldots & \ldots & \ldots \\
 \lambda_n^{n-1} (q_{n})& \lambda_n^{n-2} (q_{n})&\ldots&1
\end{bmatrix}.
\end{gather*}
Computing its inverse, we f\/ind that{\samepage
\begin{gather*}
\big(\boldsymbol{V}_R^{-1}\big)_{jk}=\frac{ (\tilde{V}_R)_{jk}}{ \det(\tilde{V}_R)} =\frac{\partial c_k}{\partial \lambda_j} \frac{1}{\prod\limits_{\stackrel{r=1}{r\neq j}}^n ( \lambda_j- \lambda_r)} .
\end{gather*}
Here the functions $c_k$ are the (opposite of the) coef\/f\/icients of the minimal polynomial of $\boldsymbol{N}$.}

Thus, we have obtained the class of separable Hamiltonian functions
\begin{gather}\label{eq:HFrob}
H_k=\sum_{i=1}^n \frac{\partial c_k}{\partial \lambda_i} \frac{ f_i(q_i,p_i)}{\prod\limits_{\stackrel{j=1}{j\neq i}}^n ( \lambda_i- \lambda_j)},
 \qquad k=1,\ldots,n ,
 \end{gather}
 that has been discussed in \cite{Bl, MTRomp} in the framework of quasi-bi-Hamiltonian (QBH) systems.
 \par
Using equation~\eqref{eq:StackHaan}, and the relations
 \begin{gather*}
 \frac{ \tilde{S}_{jr}}{ \tilde{S}_{1r}} =\frac{ (\tilde{V}_R)_{jr}}{(\tilde{V}_R)_{1r}}=\frac{\partial c_j}{\partial \lambda_r} ,
 \end{gather*}
we f\/ind that such systems admit the simple Haantjes structure $(T^*\mathcal{Q},\omega, \boldsymbol{K}_0=\boldsymbol{I},\boldsymbol{K}_1,\ldots,\boldsymbol{K}_n)$, given by
 \begin{gather}\label{eq:FrobHaan}
 \boldsymbol{K}_{j-1}:=\sum_{r=1}^n \frac{\partial c_j}{\partial \lambda_r} \left( \frac{\partial}{\partial q_r}\otimes \mathrm{d} q_r+ \frac{\partial}{\partial p_r}\otimes \mathrm{d} p_r\right),
 \qquad j=1,\ldots,n-1.
 \end{gather}

 They can be projected onto the Haantjes operators on $\mathcal{Q}$:
 \begin{gather}\label{eq:pFrobHaan}
\tilde{ \boldsymbol{K}}_{j-1}:=\sum_{r=1}^n \frac{\partial c_j}{\partial \lambda_r} \frac{\partial}{\partial q_r}\otimes \mathrm{d} q_r,
 \qquad j=1,\ldots,n-1 .
 \end{gather}

\begin{Proposition} \label{pr:KCont}
The Haantjes operators \eqref{eq:FrobHaan} of a QBH system are the elements of the control basis~\eqref{eq:ContBase} associated to the Nijenhuis operator \eqref{eq:NStack}.
\end{Proposition}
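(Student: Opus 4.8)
The plan is to exploit the fact that $\boldsymbol{N}$ and every polynomial in it are simultaneously diagonal in the separation frame, so that the operator identity $\boldsymbol{K}_{j-1}=e_j(\boldsymbol{N})$ reduces to a scalar identity between eigenvalues, which I would then establish by Horner's synthetic division of the minimal polynomial of $\boldsymbol{N}$.

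First I would note that the Nijenhuis operator \eqref{eq:NStack} is semisimple, acting as multiplication by $\lambda_r(q_r)$ on each two-dimensional eigenspace $\mathcal{D}_r=\mathrm{span}\{\partial/\partial q_r,\partial/\partial p_r\}$. Consequently any polynomial $p(\boldsymbol{N})$ is again diagonal in the coordinate frame, with eigenvalue $p(\lambda_r)$ on $\mathcal{D}_r$. In particular, writing $e_j(\lambda)=\lambda^{j-1}-c_1\lambda^{j-2}-\cdots-c_{j-1}$ for the scalar polynomial underlying the control-basis element \eqref{eq:ContBase}, the operator $e_j(\boldsymbol{N})$ has eigenvalue $e_j(\lambda_r)$ on $\mathcal{D}_r$, while by \eqref{eq:FrobHaan} the operator $\boldsymbol{K}_{j-1}$ has eigenvalue $\partial c_j/\partial\lambda_r$ on the same eigenspace. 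Since both operators are diagonal in the same frame, the proposition is equivalent to the family of scalar identities $e_j(\lambda_r)=\partial c_j/\partial\lambda_r$ for all $j$ and $r$.

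To prove these, I would first recast \eqref{eq:ContBase} as the Horner recursion $e_1=1$, $e_{j+1}(\lambda)=\lambda\,e_j(\lambda)-c_j$, and recall from \eqref{eq:mN} that, $\boldsymbol{N}$ being semisimple with distinct eigenvalues, the minimal polynomial is $m_{\boldsymbol{N}}(\lambda)=\prod_{s=1}^n(\lambda-\lambda_s)=\lambda^n-\sum_{k=1}^n c_k\lambda^{n-k}$. Evaluating this recursion at $\lambda=\lambda_r$ is exactly the synthetic division of $m_{\boldsymbol{N}}$ by $(\lambda-\lambda_r)$: since $m_{\boldsymbol{N}}(\lambda_r)=0$, the intermediate Horner values $e_j(\lambda_r)$ are precisely the coefficients of the quotient, so that
\[
\prod_{s\neq r}(\lambda-\lambda_s)=\frac{m_{\boldsymbol{N}}(\lambda)}{\lambda-\lambda_r}=\sum_{j=1}^n e_j(\lambda_r)\,\lambda^{n-j}.
\]

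Finally I would compute the same quotient by differentiating the two expressions for $m_{\boldsymbol{N}}$ with respect to the parameter $\lambda_r$: the factored form gives $\partial m_{\boldsymbol{N}}/\partial\lambda_r=-\prod_{s\neq r}(\lambda-\lambda_s)$, whereas the coefficient form gives $\partial m_{\boldsymbol{N}}/\partial\lambda_r=-\sum_{k=1}^n(\partial c_k/\partial\lambda_r)\lambda^{n-k}$. Equating these and matching the result with the previous display coefficient by coefficient yields $e_j(\lambda_r)=\partial c_j/\partial\lambda_r$, the desired eigenvalue identity, and the proposition follows. I expect the only point requiring care to be the bookkeeping that identifies the Horner intermediate values with the quotient coefficients, equivalently the verification of the classical symmetric-function identity $\partial c_j/\partial\lambda_r=(-1)^{j+1}\sigma_{j-1}$ in the variables $\{\lambda_s:s\neq r\}$; everything else follows at once from the semisimplicity of $\boldsymbol{N}$.
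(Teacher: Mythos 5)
Your proposal is correct and follows essentially the same route as the paper: both arguments reduce the claim to the scalar identity $e_j(\lambda_r)=\partial c_j/\partial\lambda_r$, i.e.\ to checking that the entries of the transition matrix $\boldsymbol{V}\boldsymbol{H}_R$ of \eqref{eq:VH} coincide with the eigenvalues of the operators \eqref{eq:FrobHaan}. The only difference is that the paper dispatches this identity as an ``explicit computation,'' whereas you actually derive it, cleanly, by identifying the Horner values $e_j(\lambda_r)$ with the coefficients of $m_{\boldsymbol{N}}(\lambda)/(\lambda-\lambda_r)$ and comparing with $-\partial m_{\boldsymbol{N}}/\partial\lambda_r$.
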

 \begin{proof}
 It is suf\/f\/icient to compute explicitly the transition matrix~\eqref{eq:VH} and to observe that the $i$-th column of such matrix coincides with the eigenvalues of the Haantjes operators~\eqref{eq:FrobHaan}.
 \end{proof}

Due to the fact that the Haantjes operators \eqref{eq:FrobHaan} are generated by the Nijenhuis opera\-tor~\eqref{eq:NStack} through the relations~\eqref{eq:pK}, the Lenard--Haantjes chain formed by the Hamiltonian func\-tions~\eqref{eq:HFrob} is an example of generalized Lenard chain (see, e.g.,~\cite{TT,TGalli12}).

\subsection{A Goldf\/ish system}

In 1996, Calogero studied a solvable system (already introduced by him in 1978) whose Hamiltonian function in canonical coordinates $(q_i,p_i)$ reads
\begin{gather} \label{eq:HGoldC}
H=\sum_{i=1}^n\frac{ e^{a p_i}}{\prod\limits_{\stackrel{j=1}{j\neq i}}^n (q_i-q_j)} .
\end{gather}
The corresponding Newton equations are
\begin{gather*} 
\ddot{q}_k=2\sum_{\stackrel{i=1}{i\neq k}}^n\frac{\dot{ q}_k\dot{q}_i}{ (q_k-q_i)} .
\end{gather*}
This model is the simplest representative of a large class of solvable models called \emph{Goldfish} systems (see~\cite{CalIso} and reference therein).

In the papers \cite{MTPLA, MTRomp}, it was proved that the Goldf\/ish system~\eqref{eq:HGoldC} and the generalized one with Hamiltonian function
\begin{gather} \label{eq:HGoldMT}
H=\sum_{i=1}^n\Bigg(\frac{ e^{a p_i}}{\prod\limits_{\stackrel{j=1}{j\neq i}}^n (q_i-q_j)}+b q_i\Bigg),\qquad b\in \mathbb{R} ,
\end{gather}
and with Newton equations
\begin{gather*} 
\ddot{q}_k=2\sum_{\stackrel{i=1}{i\neq k}}^n\frac{\dot{q}_k\dot{q}_i}{ (q_k-q_i)}-a b \dot{q}_k ,
\end{gather*}
admit a quasi-bi-Hamiltonian structure which ensures the separability of the associated H-J equation directly in the symplectic coordinates $(\boldsymbol{q};\boldsymbol{p})$. We wish to point out that the gene\-ra\-lized system~\eqref{eq:HGoldMT} admits also the~$\omega\mathcal{H}$ structure~\eqref{eq:FrobHaan}, shared by all quasi-bi-Hamiltonian systems. Therefore, in turn, it belongs to the generalized St\"ackel class~\eqref{eq:HStack}.

The Hamiltonian function of the Goldf\/ish system~\eqref{eq:HGoldMT} arises from equation~\eqref{eq:HFrob} with $\lambda_i\equiv q_i$, $i=1,\ldots,n$,
 with the following choice of the St\"ackel functions
 \begin{gather*}
f_i:=e^{ap_i}+b q_i^n ,
\end{gather*}
taking also into account the Jacobi identity~\cite{AKN}
\begin{gather*}
\sum_{i=1}^n\frac{q_i^n}{\prod\limits_{\stackrel{j=1}{j\neq i}}^n (q_i-q_j)}=
\sum_{i=1}^n q_i .
\end{gather*}

\section{Classical systems of St\"ackel}\label{section5}

The classical separable St\"ackel systems arise from equation~\eqref{eq:HStack}
by choosing as St\"ackel functions the homogeneous quadratic functions in the momenta
\begin{gather}\label{eq:cSf}
f_k:=\frac{1}{2} p_k^2+W_k(q_k) .
\end{gather}
The functions $W_k(q_k)$ are components of the so-called St\"ackel multiplicator \cite{Pars}.
With this choice, the Hamiltonian function \eqref{eq:HStack} takes the form
\begin{gather*}
H=\frac{1}{2} \sum_{j=1}^n g^j(\boldsymbol{q}) p_j^2+ V(\boldsymbol{q}),
\end{gather*}
where the functions
\begin{gather*} 
g^j(\boldsymbol{q})=\frac{ \tilde{S}_{1j}}{\det(\boldsymbol{S})}
\end{gather*}
can be interpreted as the diagonal components $g^j:=g^{jj}$ of the inverse of a metric tensor~$g$ over the conf\/iguration space~$\mathcal{Q}$:
\begin{gather} \label{eq:Sg}
\boldsymbol{G}:=\sum_{j=1}^n g^j \frac{ \partial}{\partial q_j} \otimes \frac{ \partial}{\partial q_j}.
\end{gather}
Also,
\begin{gather*} 
V(\boldsymbol{q})=\sum_{j=1}^n g^j W_j
\end{gather*}
is the potential energy.
The presence of the metric~\eqref{eq:Sg} allows one to construct the contravariant form of the Haantjes operators~\eqref{eq:Kp}, which are still diagonal, with components
\begin{gather*}
(K_{j-1})^{ii}=(K_{j-1})^i_i g^{i}=\frac{ \tilde{S}_{j i}}{\det (\boldsymbol{S})}, \qquad i, j=1,\ldots,n .
\end{gather*}

\begin{Proposition} \label{pr:KKill}
The tensor fields~\eqref{eq:Kp} are Killing tensors for the metric~\eqref{eq:Sg} and are in involution with respect to the Schouten bracket of two symmetric contravariant tensors.
 \end{Proposition}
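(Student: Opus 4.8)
The plan is to prove the two assertions of Proposition~\ref{pr:KKill} separately: first that each $\tilde{\boldsymbol{K}}_{j-1}$ is a Killing tensor for the metric $\boldsymbol{G}$ in \eqref{eq:Sg}, and then that these tensors are mutually in involution with respect to the Schouten bracket. For the first part, I would exploit the fact that we are working in St\"ackel separation coordinates, where everything is diagonal. The contravariant components of $\tilde{\boldsymbol{K}}_{j-1}$ are $(K_{j-1})^{ii}=\tilde{S}_{ji}/\det(\boldsymbol{S})$, and the metric has diagonal contravariant components $g^i=\tilde{S}_{1i}/\det(\boldsymbol{S})$. The Killing equation for a symmetric $(2,0)$ tensor $\boldsymbol{A}$ with respect to $\boldsymbol{G}$ can be phrased as the vanishing of the Schouten bracket $[\boldsymbol{G},\boldsymbol{A}]=0$; I would instead use the classical characterization that a diagonal tensor with eigenvalues $\kappa_i$ is Killing for a diagonal St\"ackel metric precisely when each $\kappa_i=(K_{j-1})^i_i$ depends only on the separable coordinate structure in the St\"ackel-compatible way. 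Concretely, the cleanest route is to verify the coordinate form of the Killing equation, which for diagonal tensors in an orthogonal separable web reduces to showing that $\partial_k\big(g^i (K_{j-1})^i_i\big)$ and the relevant cross-derivatives satisfy the St\"ackel-web conditions; because both $g^i$ and $(K_{j-1})^{ii}$ are built from cofactors of the St\"ackel matrix $\boldsymbol{S}(\boldsymbol{q})$ whose $i$-th row depends on $q_i$ only, these conditions follow from the defining row-separation property of~$\boldsymbol{S}$.

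For the involution statement, the key observation is that all the operators $\tilde{\boldsymbol{K}}_{j-1}$ are simultaneously diagonalized in the same coordinate frame $\{\partial/\partial q_r\}$, and they are polynomial functions of the single Nijenhuis/Haantjes generator $\tilde{\boldsymbol{K}}$ of \eqref{eq:Np} via the relation $\tilde{\boldsymbol{K}}_{j-1}=\sum_{i}a_i^{(j-1)}(\boldsymbol{q})\tilde{\boldsymbol{K}}^i$ established in Proposition~\ref{pr:Kp}. I would use the fact that the Schouten bracket of two Killing tensors for the same metric, when both are diagonal in a common orthogonal separable web, vanishes; this is essentially the statement that the $(0,2)$-lowered tensors are simultaneously diagonalized and their eigenvalues depend separately on the coordinates. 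The algebraic mechanism is that the Schouten bracket $[\tilde{\boldsymbol{K}}_{i-1},\tilde{\boldsymbol{K}}_{j-1}]$ of two such polynomially related, commuting, diagonal tensors reduces to an expression that is antisymmetric in $i,j$ but built from the same eigenvalue data, hence vanishes.

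The main obstacle will be the first part — verifying the Killing property — because it requires translating the abstract statement into the explicit coordinate computation and invoking the St\"ackel row-separation structure at exactly the right point. The cleanest formulation is probably to reduce the Killing equation $[\boldsymbol{G},\tilde{\boldsymbol{K}}_{j-1}]=0$ to the pair of conditions on diagonal tensors in orthogonal coordinates, namely that for all distinct $i,k$ one has $\partial_k(K_{j-1})^i_i=\big((K_{j-1})^k_k-(K_{j-1})^i_i\big)\partial_k\ln g^i$, and then to check this identity using that $\tilde{S}_{ji}/\tilde{S}_{1i}=(K_{j-1})^i_i$ depends on the coordinates only through the St\"ackel structure. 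The involution part is comparatively routine once simultaneous diagonalizability and the polynomial dependence on a single generator are in hand, since it follows from the general compatibility of Killing tensors sharing a common separable web. I would therefore allocate most of the care to setting up the Killing-equation verification and citing the standard St\"ackel characterization (for instance from Benenti's work) to close it.
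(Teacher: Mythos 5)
Your proposal is correct in outline, but it follows a genuinely different --- and much heavier --- route than the paper. The paper's entire proof is one line: the result follows immediately from the first involution relation~\eqref{eq:invTV}. The mechanism is the standard correspondence between functions on $T^*\mathcal{Q}$ that are polynomial in the momenta and symmetric contravariant tensors on $\mathcal{Q}$: choosing $\psi_k(p_k)=\frac{1}{2}p_k^2$ in~\eqref{eq:StackTV} makes $T_j$ exactly the quadratic form of the contravariant tensor obtained by raising an index of $\tilde{\boldsymbol{K}}_{j-1}$ with the metric~\eqref{eq:Sg}, and the Poisson bracket of two such fiberwise-quadratic functions is the cubic form of the Schouten bracket of the corresponding tensors. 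Hence $\{T_i,T_j\}=0$, already established via the Lenard--Haantjes chain~\eqref{eq:HcTV} and Proposition~\ref{pr:Hinv}, \emph{is} the statement that the contravariant tensors are in involution; the Killing property is the special case $i=1$, since $T_1$ is the kinetic energy of the metric and $\tilde{\boldsymbol{K}}_0=\boldsymbol{I}$. Your plan instead re-derives the classical St\"ackel--Eisenhart theory from scratch: you verify the Killing--Eisenhart conditions $\partial_k \rho_i=(\rho_k-\rho_i)\partial_k\ln g^i$ for the cofactor ratios, and then invoke the theorem that simultaneously diagonalized Killing tensors of a normal orthogonal web are automatically in involution. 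Both steps are true and citable (they go back to St\"ackel, Eisenhart and Benenti), so your proof would close, but as written they carry all of the real work: the cofactor verification is precisely St\"ackel's theorem, and your justification of the involution step (``antisymmetric in $i,j$ but built from the same eigenvalue data, hence vanishes'') is not an argument --- an antisymmetric expression need not vanish --- so you would have to genuinely prove or cite the Killing--St\"ackel involution theorem there. What the paper's approach buys is that everything is absorbed into results already in place, at the cost of silently using the Poisson--Schouten correspondence; what yours buys is independence from the Lenard--Haantjes machinery and direct contact with the classical separation-of-variables literature.
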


\begin{proof}
The result follows immediately from the f\/irst involution relation~\eqref{eq:invTV}.
\end{proof}

\section{The Jacobi--Calogero model}\label{section6}

In this section, we analyze in the framework of Haantjes geometry the celebrated rational Calogero model describing particles on a line, interacting with an inverse square potential~\cite{CalScp}. We shall limit ourselves to the case of three particles, already introduced by Jacobi in~1866~\cite{Jac}, but totally forgotten till his contribution was re-discovered in~\cite{Pere}. The Jacobi--Calogero Hamiltonian function reads
\begin{gather} \label{eq:HCal}
H=\frac{1}{2} \big(p_x^2 +p_y^2+p_z^2\big)+V_{\rm Cal} ,
\end{gather}
where the
potential energy $V_{\rm Cal}$ is
\begin{gather*} \label{eq:VCal}
V_{\rm Cal}=\frac{a}{(x-y)^2}+ \frac{a}{(y-z)^2}+\frac{a}{(z-x)^2}, \qquad a\in \mathbb{R} ,
\end{gather*}
and $(x,y,z)$ are the coordinates of the three particles on the line.
The conf\/iguration space of this system is $\mathcal{Q}=\mathcal{E}^3 \setminus \Delta$, where
$\mathcal{E}^3$ denotes the 3-dimensional Euclidean af\/f\/ine space and $\Delta:=\{x=y, x=z,y=z \}$ the set of the collision planes. Its cotangent bundle
$T^*\mathcal{Q}\simeq\mathcal{Q}\times \mathcal{E}^3 $ is the phase space of the model.

Two bi-Hamiltonian structures has been worked out for this system. However, in the f\/irst one~\cite{MMars}, the computation of a second Poisson operator $\boldsymbol{P}_1$ and therefore of a Nijenhuis operator $\boldsymbol{N}:=\boldsymbol{P}_1\boldsymbol{\Omega}$ seems to be prohibitively complicated and has not been carried out explicitly. The other bi-Hamiltonian structure \cite{Tsi} is an ``irregular" one and does not provide integrals of motion dif\/ferent from the Hamiltonian~\eqref{eq:HCal}.

Here, we will compute f\/ive Haantjes operators that turn out to be very simple, that is, at most quadratic in the coordinates and momenta, and provide integrals of motion by means of three independent Lenard--Haantjes chains.

As is well known, the model is \emph{maximally superintegrable}, namely, it admits f\/ive independent integrals of motion in involution \cite{MPW, TWR}. Furthermore, it is also \emph{multi-separable}. In fact, in the interesting paper \cite{BCRcal}, it has been proved that, besides the known circular cylindrical coordinates \cite{Cal69}, there are four other webs in which the associate HJ equation is separable: spherical, parabolic, oblate spheroidal and prolate spheroidal. All such webs have a common axis of rotational symmetry. Using the f\/irst three of them, we will be able to construct the Haantjes structures of the model. The other two webs, oblate and prolate spheroidal, do not provide further independent Haantjes structures. Specif\/ically, we will write down the Calogero Hamiltonian function~\eqref{eq:HCal} (the source) and two integrals of motion (the $\boldsymbol{K}$-images) in each of the separable webs and we will apply Theorem~59 of~\cite{TT2014}, that assures the construction of a~Haantjes structure for which the separable coordinates are DH coordinates.

Due to the absence of external force f\/ields, the linear momentum
\begin{gather*}
p_x+p_y+p_z
\end{gather*}
is a (linear) integral of motion. In the conf\/iguration space $\mathcal{Q}$, it is equivalent to the conserved scalar quantity $(\vec{p}\cdot \vec{u})$, where~$\vec{p}$ and $\vec{u}$ are the vectors
\begin{gather*}
\vec{p} := p_x \vec{e}_x+ p_y \vec{e}_y+\vec{e}_z, \qquad \vec{u}:=\vec{e}_x+ \vec{e}_y+\vec{e}_z,
\end{gather*}
and \looseness=-1 $(\vec{e}_x, \vec{e}_y,\vec{e}_z)$ is a basis of three orthonormal vectors in~$\mathbb{E}_3$.
This fact amounts to the rotational symmetry of the model around the axis $(O,\vec{u})$, that is, the straight line passing through the origin of the coordinates~$O$ and parallel to the vector~$\vec{u}$. Thus, such an axis is a symmetry axis for each of the separable webs above-mentioned. Following~\cite{BCRcal}, we consider the integral of motion
\begin{gather*}
 H_2 =\frac{1}{6} \big(\vec{L}_0\cdot \vec{u}\big)^2+
\left|\vec{r}\times \frac{\vec{u}}{|\vec{u}|}\right|^2V_{\rm Cal}
= \frac{1}{6} \big( (yp_z-zp_y)+(zp_x-xp_z)+(xp_y-yp_x)\big)^2\\
\hphantom{H_2 =}{}
+\frac{1}{3}\big((x-y)^2
+(x-z)^2+(y-z)^2\big) V_{\rm Cal} ,
\end{gather*}
related to the axial angular momentum. This integral has a privileged role for it is separable in each of the separated webs above-mentioned. Thus, writing down the integral and the Hamiltonian function~\eqref{eq:HCal} in one of the separated webs, and
using equation~\eqref{eq:LSoV} we obtain a diagonal operator (consequently a Haantjes one) which in cartesian coordinates reads
\begin{gather*}
\boldsymbol{K}_1=
\left[
\begin{array}{@{}c|c@{}}
\boldsymbol{A} &0_3 \\
 \hline
 \boldsymbol{B}&\boldsymbol{A}\\
 \end{array}
\right ],
\end{gather*}
where
\begin{gather*}
\boldsymbol{A}=\frac{1}{3}
\left[
\begin{matrix}
(y-z)^2&(y-z)(z-x)&(y-z)(x-y)\\
(y-z)(z-x)&(x-z)^2&(z-x)(x-y)\\
(y-z)(x-y)&(z-x)(x-y)&(x-y)^2
\end{matrix}
\right ] ,
\\
\boldsymbol{B}=\frac{1}{3}\big ( (x-y)p_z+(y-z)p_x+(z-x)p_y \big)
\left[
\begin{matrix}
0&1&-1\\
-1&0&1\\
1&-1&0
\end{matrix}
\right ] .
\end{gather*}
Such an operator provides
\begin{gather*}
\mathrm{d} H_2 =\boldsymbol{K}_1^T\mathrm{d} H ,
\end{gather*}
which is the f\/irst element common to the three Lenard--Haantjes chains presented in equations~\eqref{eq:LHcJC}.

Now, we shall focus on the separable webs.
\subsection{Cylindrical Haantjes operator}

Let us consider the integral of the (square of the) linear momentum
\begin{gather*}
H_{\rm cil}=\frac{1}{2}(\vec{p}\cdot \vec{u})^2=\frac{1}{2}(p_x +p_y+p_z)^2 .
\end{gather*}
Once we write it in cylindrical circular coordinates with axes $(O,\vec{u})$ together with the Hamiltonian function \eqref{eq:HCal}, according to equation~\eqref{eq:LSoV}
we can def\/ine a second uniform Haantjes operator, given in cartesian coordinates by
\begin{gather*}
\boldsymbol{K}_{\rm cil}=
\left[
\begin{array}{@{}c|c@{}}
\boldsymbol{A}_{\rm cil} &\boldsymbol{0}_3 \\
 \hline
 \boldsymbol{0}_3&\boldsymbol{A}_{\rm cil}\\
 \end{array}
\right ] ,
\qquad \text{where} \qquad
\boldsymbol{A}_{\rm cil}=
\left[
\begin{matrix}
1&1&1\\
1&1&1\\
1&1&1
\end{matrix}
\right ] .
\end{gather*}

\subsection{Spherical Haantjes operator}
Analogously, we consider the following integral of motion
\begin{gather*}
 H_{\rm sph} = \frac{1}{2} |\vec{L}_O|^2 +|\vec{r}|^2 V_{\rm Cal} \\
\hphantom{H_{\rm sph}}{} = \frac{1}{2}\big ( (yp_z-zp_y)^2+(zp_x-xp_z)^2+(xp_y-yp_x)^2\big)+\big(x^2 +y^2+z^2\big) V_{\rm Cal}
 \end{gather*}
 related to the (square) module of the angular momentum. From the expression of this integral in spherical coordinates and from the Hamiltonian function \eqref{eq:HCal}, we construct a Haantjes operator that in cartesian coordinates reads
\begin{gather*}
\boldsymbol{K}_{\rm sph}=
\left[
\begin{array}{@{}c|c@{}}
\boldsymbol{A}_{\rm sph} &0_3 \\
 \hline
 \boldsymbol{B}_{\rm sph}&\boldsymbol{A}_{\rm sph}\\
 \end{array}
\right ],
\end{gather*}
where
\begin{gather*}
\boldsymbol{A}_{\rm sph}=
\left[
\begin{matrix}
y^2+z^2&-xy&-zx\\
-xy&x^2+z^2&-yz\\
-zx&-yz&x^2
+y^2
\end{matrix}
\right ] ,
\\
\boldsymbol{B}_{\rm sph}=
\left[
\begin{matrix}
0&yp_x-xp_y&zp_x-xp_z\\
-(yp_x-xp_y)&0&zp_y-yp_z\\
-(zp_x-xp_z)&-(zp_y-yp_z)&0
\end{matrix}
\right ] .
\end{gather*}

\subsection{Parabolic Haantjes operator}
We consider the following integral of motion
\begin{gather*}
 H_{\rm par} = \frac{1}{2} \big ( (\vec{p}\cdot \vec{u}) (\vec{p}\cdot \vec{r}) -(\vec{r}\cdot\vec{u})(\vec{p}\cdot\vec{p})\big)-(\vec{r}\cdot\vec{u})V_{\rm Cal} \\
\hphantom{H_{\rm par}}{}
 = \frac{1}{2}\big ( (p_x+p_y+p_z)(xp_x+yp_y+zp_z)-(x+y+z)\big(p_x^2+p_y^2+p_z^2\big)\big)-(x+y+z)V_{\rm Cal}
\end{gather*}
related to the product of the axial with the radial linear momentum. The associated Haantjes operator is
\begin{gather*}
\boldsymbol{K}_{\rm par}=
\left[
\begin{array}{@{}c|c@{}}
\boldsymbol{A}_{\rm par} &0_3 \\
 \hline
 \boldsymbol{B}_{\rm par}&\boldsymbol{A}_{\rm par}\\
 \end{array}
\right ],
\end{gather*}
where
\begin{gather*}
\boldsymbol{A}_{\rm par}=\frac{1}{2}
\left[
\begin{matrix}
-2(y+z)&(x+y)&(x+z)\\
(x+y)&-2(x+z)&(y+z)\\
(x+z)&(y+z)&-2(x+y)
\end{matrix}
\right ] ,
\\
\boldsymbol{B}_{\rm par}=\frac{1}{2}
\left[
\begin{matrix}
0&p_y-p_x&p_z-p_x\\
-(p_y-p_x)&0&p_z-p_y\\
-(p_z-p_x)&-(p_z-p_y)&0
\end{matrix}
\right ] .
\end{gather*}
The following result holds.

\begin{Proposition}
The three Haantjes structures $(T^*\mathcal{Q}, \omega, \boldsymbol{K}_0=\boldsymbol{I}_6,\boldsymbol{K}_1,\boldsymbol{K}_{\rm cyl})$,
$(T^*\mathcal{Q}, \omega, \boldsymbol{K}_0=\boldsymbol{I}_6,\boldsymbol{K}_1,\boldsymbol{K}_{\rm sph})$,
$(T^*\mathcal{Q}, \omega, \boldsymbol{K}_0=\boldsymbol{I}_6,\boldsymbol{K}_1,\boldsymbol{K}_{\rm par})$ together with the Hamiltonian function~\eqref{eq:HCal} ge\-ne\-rate three Lenard--Haantjes chains with two common elements
\begin{gather} \label{eq:LHcJC}
\begin{split} &
\xymatrix{
&&&\boldsymbol{K}_{\rm cyl}^T \mathrm{d} H=\mathrm{d} H_{\rm cyl}, \\
\boldsymbol{K}_0^T\mathrm{d} H=\mathrm{d} H_1 ,& \boldsymbol{K}_1^T\mathrm{d} H=\mathrm{d} H_2 ,&
\ar[ur] \ar[r] \ar[dr] &\boldsymbol{K}_{\rm sph}^T \mathrm{d} H=\mathrm{d} H_{\rm sph}, \\
&&&\boldsymbol{K}_{\rm par}^T \mathrm{d} H=\mathrm{d} H_{\rm par}.
}
\end{split}
\end{gather}

\end{Proposition}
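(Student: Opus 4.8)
\textit{Proof proposal.}
The plan is to realize each of the three structures as an instance of \cite[Theorem~59]{TT2014} applied separately to one of the three separable webs, and then to read off the two shared chain elements from the circumstance that the axial integral $H_2$ separates simultaneously in all of them.

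First I would record, following \cite{BCRcal}, that for $n=3$ the Jacobi--Calogero model is multi-separable: the Hamilton--Jacobi equation associated with \eqref{eq:HCal} separates in the cylindrical, spherical and parabolic webs, all sharing the symmetry axis $(O,\vec{u})$. In each web one disposes of a triple of functionally independent functions in separable involution, namely $\{H,H_2,H_{\rm cyl}\}$, $\{H,H_2,H_{\rm sph}\}$ and $\{H,H_2,H_{\rm par}\}$, where $H=H_1$ is the Hamiltonian \eqref{eq:HCal} and $H_2$ is the axial angular-momentum integral introduced above. The decisive point is that $H$ and $H_2$ belong to every one of these triples, $H_2$ being separable in each of the three webs.

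Next, for a fixed web I would apply \cite[Theorem~59]{TT2014} in the corresponding Darboux separation coordinates $(\boldsymbol{q},\boldsymbol{p})$. Since the three Hamiltonians of the triple are pairwise in separable involution, that theorem furnishes, through \eqref{eq:LSoV} with $H_1=H$, $H_2$ and the third function $H_3\in\{H_{\rm cyl},H_{\rm sph},H_{\rm par}\}$, a Haantjes structure whose operators are diagonal in the separation coordinates, the two eigenvalues on each canonical pair $(q_i,p_i)$ coinciding. The remaining axioms of Definition \ref{def:HM} are then transparent from this simultaneous diagonal form: operators diagonal in one and the same coordinate system commute, so \eqref{eq:Lcomp} holds; every function-linear combination of them is again diagonal in those coordinates, hence Haantjes (equivalently, its coordinate eigen-distributions are integrable, cf.\ Theorem \ref{th:Haan}), giving the module condition \eqref{eq:LtildeC}; and the equality of the two eigenvalues on each pair $(q_i,p_i)$ is exactly the compatibility \eqref{eq:LOcomp} with $\omega=\sum_i \mathrm{d} p_i\wedge \mathrm{d} q_i$. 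Pulling these diagonal operators back to $(x,y,z;p_x,p_y,p_z)$ reproduces the explicit matrices displayed above; concretely, the symmetry of the diagonal block $\boldsymbol{A}$ and the skew-symmetry of the lower block $\boldsymbol{B}$ are the coordinate form of \eqref{eq:LOcomp}. The chain relations $\boldsymbol{K}_{\alpha}^T\,\mathrm{d} H=\mathrm{d} H_{\alpha+1}$ are then automatic, since \eqref{eq:LSoV} is built so as to reproduce the Lenard--Haantjes chain generated by $H$ in the sense of Definition \ref{defi:LHc}; in particular $\boldsymbol{K}_0^T\,\mathrm{d} H=\mathrm{d} H$ (so $H_1=H$) and $\boldsymbol{K}_1^T\,\mathrm{d} H=\mathrm{d} H_2$. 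Because $\boldsymbol{K}_1$ is constructed from the single integral $H_2$, which separates in all three webs, it is literally the same operator in each of the three structures; this is exactly why the first two elements of the chains in \eqref{eq:LHcJC} are common, the branching occurring only at the third step into $H_{\rm cyl}$, $H_{\rm sph}$ and $H_{\rm par}$. Involutivity of each triple finally follows from Proposition \ref{pr:Hinv}.

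The step I expect to be the genuine obstacle is the verification that the explicit \emph{cartesian} operators really satisfy the Haantjes, compatibility and module conditions. If one is willing to invoke \cite[Theorem~59]{TT2014} this is immediate from diagonalizability in the separation coordinates, as above; checking it head-on in the coordinates $(x,y,z;p_x,p_y,p_z)$ is far less pleasant, because the lower blocks $\boldsymbol{B}$, $\boldsymbol{B}_{\rm sph}$, $\boldsymbol{B}_{\rm par}$ depend on the momenta, so a direct evaluation of $\mathcal{H}_{\boldsymbol{K}_\bullet}$ and of \eqref{eq:LtildeC} in cartesian coordinates is a long tensorial computation that I would sidestep by working throughout in the separation coordinates and returning to cartesian coordinates only at the very end.
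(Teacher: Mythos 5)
Your proposal is correct and follows essentially the same route as the paper: both rest on the observation that $\boldsymbol{K}_0$ and $\boldsymbol{K}_1$ are diagonal in all three separable webs (since $H$ and the axial integral $H_2$ separate in each of them), while $\boldsymbol{K}_{\rm cyl}$, $\boldsymbol{K}_{\rm sph}$, $\boldsymbol{K}_{\rm par}$ are diagonal in their respective webs by the construction~\eqref{eq:LSoV} via Theorem~59 of~\cite{TT2014}, so that the conditions of Definition~\ref{def:HM} and the chain relations follow. Your write-up merely spells out in more detail what the paper's terse proof leaves implicit.
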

\begin{proof}
In any of the separable webs above-mentioned the operators $ \boldsymbol{K_0}$ and $ \boldsymbol{K}_1$ take a diagonal form. Furthermore, $ \boldsymbol{K}_{\rm cyl}$, $ \boldsymbol{K}_{\rm sph}$, $ \boldsymbol{K}_{\rm par}$, by construction, are diagonal in the cylindrical, spherical, parabolic webs, respectively. Then, they fulf\/ill all the conditions of Def\/inition~\ref{def:HM}.
\end{proof}

\begin{Remark}
The existence of more than one independent Lenard--Haantjes chain is due to the superintegrability of the Calogero model. However, only two of the three chains are independent as
\begin{gather*}
\mathrm{d} H_1 \wedge\mathrm{d} H_2\wedge\mathrm{d} H_{\rm cyl} \wedge \mathrm{d} H_{\rm sph}\wedge \mathrm{d} H_{\rm par}=0 ,
\end{gather*}
and
\begin{gather*}
\mathrm{d} H_1 \wedge\mathrm{d} H_2\wedge\mathrm{d} H_{\rm cyl} \wedge \mathrm{d} H_{\rm sph} \neq 0,\\
\mathrm{d} H_1 \wedge\mathrm{d} H_2\wedge\mathrm{d} H_{\rm cyl} \wedge \mathrm{d} H_{\rm par} \neq 0,\\
\mathrm{d} H_1 \wedge\mathrm{d} H_2\wedge \mathrm{d} H_{\rm sph}\wedge \mathrm{d} H_{\rm par} \neq 0.
\end{gather*}

Therefore, an additional independent integral is required in order to prove the maximal superintegrability of the model. The additional integral is the cubic one in the momenta
\begin{gather*}
 H_3:=\frac{1}{3}\big(p_x^3 +p_y^3+p_z^3\big)+a \left(\frac{p_x+p_y}{(x-y)^2}+\frac{p_x+p_z}{(x-z)^2}+ \frac{p_y+p_z}{(y-z)^2}+\frac{p_x+p_z}{(x-z)^2}\right) ,
\end{gather*}
which is in involution both with $H_1$ and $H_2$. The problem of f\/inding a~Haantjes structure that involves such an integral is under investigation.
\end{Remark}

\begin{Remark}
According to Proposition~\ref{pr:Kp}, all the previous Haantjes operators can be projected
onto the conf\/iguration space. Each projection is simply given by the f\/irst block of the representative matrix, that is, by
$\boldsymbol{I}_3$, $\boldsymbol{A}$, $\boldsymbol{A}_{\rm cil}$, $\boldsymbol{A}_{\rm sph}$, $\boldsymbol{A}_{\rm par}$.
These two-tensors in the conf\/iguration space coincide with the mixed form $(1,1)$ of the Killing tensors found in~\cite{BCRcal}.
\end{Remark}

\section{A Haantjes route to Benenti systems} \label{section7}

In this section, we will prove that the $\boldsymbol{L}$-systems introduced by S.~Benenti~\cite{Ben92, Ben93}
 and discussed in~\cite{IMM} within a bi-Hamiltonian framework, can be recovered by projecting onto the conf\/iguration space the Haantjes operators of the QBH systems~\eqref{eq:FrobHaan}. This can be done by choosing the classical quadratic functions in the momenta~\eqref{eq:cSf} as St\"ackel functions in the Hamiltonian~\eqref{eq:HFrob}. Indeed, the components of the metric~\eqref{eq:Sg} turn out to be
\begin{gather} \label{eq:gQBH}
g^{i}=\frac{ (\tilde{V}_R)_{1i}}{\det(\tilde{V}_R)} =\frac{\partial c_1}{\partial \lambda_i} \frac{1}{\prod\limits_{\stackrel{j=1}{j\neq k}}^n ( \lambda_i- \lambda_j)} =
\frac{1}{\prod\limits_{\stackrel{j=1}{j\neq i}}^n ( \lambda_i- \lambda_j)} .
\end{gather}
Now we are able to prove the following
\begin{Proposition}
The projected Haantjes operators \eqref{eq:pFrobHaan} are Killing tensors w.r.t.\ the metric~\eqref{eq:gQBH} and commute with each other. As for the QBH systems $\lambda_r(\boldsymbol{q})=\lambda_r(q_r)$, the projected operator~\eqref{eq:Np}, which we shall denote by~$\boldsymbol{L}$, is a~Nijenhuis operator and generates the Killing tensors~\eqref{eq:pFrobHaan} by means of the relations
\begin{gather}
 \tilde{\boldsymbol{K}_0} = \boldsymbol{L}^0=\boldsymbol{I}, \nonumber\\
\tilde{\boldsymbol{K}_\alpha} = -\sum_{j=0}^{\alpha -1} c_{\alpha-j} \boldsymbol{L}^j+\boldsymbol{L}^\alpha , \qquad \alpha=1,\ldots, n-1.
\label{eq:BenTens}
\end{gather}
Moreover, it is a $L$-tensor or a conformal Killing tensor of trace-type, i.e., it fulfills the relation
\begin{gather} \label{eq:NCKT}
[\boldsymbol{L},\boldsymbol{G}]=-2 \boldsymbol{X} \odot \boldsymbol{G}, \qquad
\boldsymbol{X}=\boldsymbol{G} \mathrm{d} (\operatorname{tr}(\boldsymbol{L}) ) .
\end{gather}
Here, the symbols $[\,,\,]$ and $\odot$ denote the Schouten bracket and the symmetric product of two contravariant tensor fields, respectively.
 Furthermore, the potential functions~\eqref{eq:StackTV} form the Haantjes chain in~$\mathcal{Q}$
\begin{gather} \label{eq:Hcp}
\tilde{\boldsymbol{K}}_{j-1}^T\mathrm{d} V_1= \mathrm{d} V_{j},\qquad j=1,\ldots, n .
\end{gather}
\end{Proposition}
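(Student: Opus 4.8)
The plan is to dispatch the Killing, commutativity, Nijenhuis and generation claims by specialization of earlier results, reserving the real work for the $L$-tensor identity~\eqref{eq:NCKT}. First, the operators~\eqref{eq:pFrobHaan} are the quasi-bi-Hamiltonian case of the projected operators~\eqref{eq:Kp}, and the metric~\eqref{eq:gQBH} is the corresponding case of~\eqref{eq:Sg}; hence their Killing property is exactly Proposition~\ref{pr:KKill}, which rests on the first involution relation in~\eqref{eq:invTV} for the quadratic choice $f_k=\frac{1}{2}p_k^2$. Their mutual commutativity is immediate, since all the~\eqref{eq:pFrobHaan} are simultaneously diagonal in $\boldsymbol{q}$. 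That $\boldsymbol{L}$ of~\eqref{eq:Np} is now a \emph{Nijenhuis} (not merely Haantjes) operator follows because each eigenvalue $\lambda_r$ depends on the single homologous coordinate $q_r$: inserting $\boldsymbol{L}^r_r=\lambda_r(q_r)$ into the coordinate expression for the Nijenhuis torsion makes every term vanish identically.

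Next, for the generation formula~\eqref{eq:BenTens} I would simply recognize it as the control-basis relation~\eqref{eq:ContBase} with $\boldsymbol{K}$ replaced by $\boldsymbol{L}$, after the reindexing $e_{\alpha+1}=-\sum_{j=0}^{\alpha-1}c_{\alpha-j}\boldsymbol{L}^{j}+\boldsymbol{L}^{\alpha}$. By Proposition~\ref{pr:KCont} the operators~\eqref{eq:FrobHaan} are precisely the control-basis elements of the Nijenhuis operator~\eqref{eq:NStack}; since the coefficients $c_k$ are functions of $\boldsymbol{q}$ only, fiber-projection commutes with forming polynomials in the operator, so~\eqref{eq:BenTens} is just the projection of Proposition~\ref{pr:KCont} onto $\mathcal{Q}$.

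The heart of the statement is the special-conformal-Killing identity~\eqref{eq:NCKT}, and here I would pass to the standard symbol correspondence: to a symmetric contravariant tensor $\boldsymbol{A}$ one associates the momentum polynomial $P_{\boldsymbol{A}}=A^{i_1\ldots i_p}p_{i_1}\cdots p_{i_p}$, under which (up to the fixed sign convention) the Schouten bracket becomes the canonical Poisson bracket and the symmetric product $\odot$ becomes ordinary multiplication. Reading $\boldsymbol{L}$ in~\eqref{eq:NCKT} as its metric associate $\boldsymbol{L}\boldsymbol{G}$, one has $P_{\boldsymbol{G}}=\sum_k g^k p_k^2=2T_1$, while~\eqref{eq:HFrob} with $f_i=\frac{1}{2}p_i^2$ and the elementary identity $\partial c_2/\partial\lambda_i=\lambda_i-c_1$ (with $c_1=\operatorname{tr}(\boldsymbol{L})=\sum_r\lambda_r$) give $2T_2=\sum_i(\lambda_i-c_1)g^i p_i^2$, whence $P_{\boldsymbol{L}\boldsymbol{G}}=\sum_i\lambda_i g^i p_i^2=2T_2+2c_1T_1$. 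The bracket then collapses: using the involution $\{T_1,T_2\}=0$ from~\eqref{eq:invTV}, $\{P_{\boldsymbol{L}\boldsymbol{G}},P_{\boldsymbol{G}}\}=4\{c_1T_1,T_1\}=4T_1\{c_1,T_1\}$, and a one-line computation gives $\{c_1,T_1\}=\sum_k\lambda_k'(q_k)g^k p_k=P_{\boldsymbol{X}}$ with $\boldsymbol{X}=\boldsymbol{G}\,\mathrm{d}(\operatorname{tr}\boldsymbol{L})$. Thus the bracket equals a scalar multiple of $P_{\boldsymbol{X}}P_{\boldsymbol{G}}=P_{\boldsymbol{X}\odot\boldsymbol{G}}$, and fixing the normalization produces exactly the coefficient $-2$ of~\eqref{eq:NCKT}. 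The main obstacle is precisely this last bookkeeping: one must pin down the sign and factor relating the Schouten and Poisson brackets and correctly identify $\boldsymbol{L}\boldsymbol{G}$ with the second chain element modulo $c_1T_1$; once these are fixed the identity reduces to the already-established involution plus the single explicit bracket $\{c_1,T_1\}$.

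Finally, the potential chain~\eqref{eq:Hcp} is the fiber-projection of the second relation in~\eqref{eq:HcTV}. Since the potentials $V_j$ of~\eqref{eq:StackTV} depend on $\boldsymbol{q}$ alone, their differentials are semibasic, so $\boldsymbol{K}_{j-1}^{T}\mathrm{d}V_1=\mathrm{d}V_j$ restricts without change to $\tilde{\boldsymbol{K}}_{j-1}^{T}\mathrm{d}V_1=\mathrm{d}V_j$ on $\mathcal{Q}$, the projected operator acting on $\mathrm{d}q$-covectors exactly as the original one.
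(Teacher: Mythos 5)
Your proposal is correct and follows essentially the same route as the paper: the Killing, commutativity and generation claims are dispatched by specializing Propositions~\ref{pr:KKill} and~\ref{pr:KCont}, and the potential chain~\eqref{eq:Hcp} is obtained by fiber projection of the second chain in~\eqref{eq:HcTV}, exactly as in the text. For the key identity~\eqref{eq:NCKT} the paper simply writes $[\boldsymbol{L},\boldsymbol{G}]=[c_1\boldsymbol{G}+\tilde{\boldsymbol{K}}_1,\boldsymbol{G}]=-2[c_1,\boldsymbol{G}]\odot\boldsymbol{G}$, using $[\tilde{\boldsymbol{K}}_1,\boldsymbol{G}]=0$ and the Leibniz rule for the Schouten bracket; your momentum-polynomial computation is precisely the symbol-level transcription of this argument (your $2T_2$ is the symbol of the contravariant $\tilde{\boldsymbol{K}}_1$, and $\{T_1,T_2\}=0$ is exactly its Killing property), so the two proofs coincide up to the sign and normalization conventions you already flag.
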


\begin{proof}
The f\/irst assertion is a direct consequence of Proposition~\ref{pr:KKill} and of the compatibility condition~\eqref{eq:Lcomp}.
The generating formula~\eqref{eq:BenTens} follows from Proposition~\ref{pr:KCont}.
Property~\eqref{eq:NCKT} is a~consequence of equation~\eqref{eq:BenTens}, which for $\alpha=1$ and in contravariant form implies
\begin{gather*}
\boldsymbol{L}=c_1 \boldsymbol{G}+\tilde{\boldsymbol{K}_1},
\end{gather*}
and of the properties of the Schouten bracket. Indeed,
\begin{gather*}
[\boldsymbol{L},\boldsymbol{G}]= [c_1 \boldsymbol{G}+\tilde{\boldsymbol{K}_1},\boldsymbol{G}]=-2 [c_1 ,\boldsymbol{G}] \odot \boldsymbol{G} .
\end{gather*}
Finally, the potential functions $V_j(\boldsymbol{q})$ in equation~\eqref{eq:StackTV} can be projected naturally along the f\/ibers of $T^{*} \mathcal{Q}$. Therefore, the second Haantjes chain of~\eqref{eq:HcTV} can also be projected onto $\mathcal{Q}$, giving equation~\eqref{eq:Hcp}.
\end{proof}

\subsection*{Acknowledgements}

The work of P.T.~has been partly supported by the research project FIS2015-63966, MINECO, Spain
and partly by ICMAT Severo Ochoa project SEV-2015-0554 (MINECO).
G.T.~acknow\-led\-ges the f\/inancial support of the research project PRIN 2010-11 ``Geometric and analytic theory of Hamiltonian systems in f\/inite and inf\/inite dimensions''. Moreover, he thanks G.~Rastelli for interesting discussions about the Jacobi--Calogero model.
We also thank the anonymous referees for a careful reading of the manuscript and for several useful suggestions.

\pdfbookmark[1]{References}{ref}
\LastPageEnding

\end{document}